\newcommand{\vect}[1]{{\boldsymbol{#1}}} 
\newcommand{\lo}{\text{low}} 
\newcommand{\hi}{\text{high}} 
\begin{document}

\title{Conformal Multi-Target Hyperrectangles}

\author{\name Max Sampson \email max-sampson@uiowa.edu \\
       \addr Department of Statistics \& Actuarial Science\\
       University of Iowa\\
       Iowa City, Iowa 52242-1409, USA
       \AND
       \name Kung-Sik Chan \email kung-sik-chan@uiowa.edu \\
       \addr Department of Statistics \& Actuarial Science\\
       University of Iowa\\
       Iowa City, Iowa 52242-1409, USA}

\editor{TBD}

\maketitle

\begin{abstract}

We propose conformal hyperrectangular prediction regions for multi-target regression. We propose split conformal prediction algorithms for both point and quantile regression to form hyperrectangular prediction regions, which allow for easy marginal interpretation and do not require covariance estimation. In practice,  it is preferable that a prediction region is balanced, that is, having identical marginal prediction coverage, since prediction accuracy is generally equally important across components of the response vector. The proposed algorithms possess two desirable properties, namely, tight asymptotic overall nominal coverage as well as asymptotic balance, that is, identical asymptotic marginal coverage, under mild conditions. We then compare our methods to some existing methods on both simulated and real data sets. Our simulation results and real data analysis show that our methods outperform existing methods while achieving the desired nominal coverage and good balance between dimensions.

\end{abstract}

\begin{keywords}
Balance, 
  Conformal Prediction, 
  Multi-Output Prediction,
  Multi-target regression, 
  Uncertainty Quantification, 
\end{keywords}

\author{Max Sampson \& Kung-Sik Chan}

\section{Introduction}

\sloppy We consider the problem of forming semi-parametric prediction regions with a $p$-dimensional response for $p \ge  2$. Assume  $n$ i.i.d. (more generally, exchangeable) copies of $((\vect{Y}_1, \vect{X}_1), (\vect{Y}_2, \vect{X}_2)\ldots, (\vect{Y}_n, \vect{X}_n))$, are observed. Then, we want a set, $\vect{C}(\vect{x}) = \vect{C}_n((\vect{Y}_1, \vect{X}_1),$ $(\vect{Y}_2, \vect{X}_2),\ldots,$ $(\vect{Y}_n, \vect{X}_n), \vect{x})$ such that for a new pair $(\vect{Y}_{n+1}, \vect{X}_{n+1})$,
\begin{equation}
\mathbb{P}(\vect{Y}_{n+1} \in \vect{C}(\vect{X}_{n+1})) \geq 1-\alpha
\label{eq:p_dim_response_coverage}
\end{equation}
If a suitable non-conformity score is used, this coverage can be guaranteed in finite samples using conformal prediction \citep{conformal_book, dis_free_pred_COPS}. As with other conformal prediction methods, we aim to make no distributional assumptions with our prediction sets. Because of that, we focus on a hyperrectangular shape for the prediction sets. This is a similar idea to hyperrectangular confidence intervals and balanced bootstrap intervals from \cite{robertson_jones_2021_hyperrectangle, beran_1998_balanced_boostrap, romano_balanced_rectangle_2010}. One advantage that hyperrectangular regions have compared to ellipsoidal regions is that they are easier to visualize in more than two dimensions. Along with being easier to visualize, a marginal interval can also easily be found. For example, in a two dimensional response case, the marginal interval for the first response is the same regardless of the second response's value. This would not be the case with an ellipsoidal prediction region. Though, this marginal interval may have a coverage rate higher than $1 - \alpha$. Another advantage is that no estimate of a covariance structure between dimensions needs to be made for rectangular regions. 

Borrowing from conformal prediction for multivariate functional data, one option would be to use a non-conformity score of $V = \max\limits_{i = 1, \ldots, q}|Y_i - \hat{Y}_i|$ \citep{diquigiovanni2021_functional_conformal}. This guarantees the coverage in \eqref{eq:p_dim_response_coverage}, but fails to take into account the fact that one (or more) of the $p$-dimensions may have a higher variability than the others, causing overcoverage and poor balance. There are also methods proposed in \cite{kuleshov2018a_pdim, johnstone2021_pdim} that use distance functions, like the L2 norm or Mahalanobis distance, for their non-conformity scores. While in theory, this is a valid approach, it requires a $p-$dimensional grid search which limits the practical use to a few dimensions. The Mahalanobis distance also requires a covariance matrix to be estimated. Other methods use a Bonferroni or similar correction for independently modeled responses \citep{neeven2018a_pdim, messoudi2020a_pdim}. This can cause overcoverage and wider than necessary regions. We outline a new non-conformity score that attempts to take the differences in variability into account, while guaranteeing the coverage in \eqref{eq:p_dim_response_coverage}. Moreover, the new non-conformity score is shown to achieve an asymptotically balanced conformal prediction hyperrectangle under mild conditions. 

For multiple comparisons, balance is defined as having the same type 1 error rate for each comparison \citep{beran_1998_balanced_boostrap, tukey_comparisons_1953, scheffe_contrasts_1953, roy_bose_balance_1953}. We extend this idea to hyperrectangular prediction regions for a $p$-dimensional response. Balance for hyperrectangular prediction regions means that the marginal prediction intervals enjoy identical coverage rates. For multiple comparisons, balance has a natural interpretation, fairness. For example, the importance of one follow-up test from an ANOVA model is not more important than another. One wouldn't conduct follow-up tests where the first comparison has a marginal type 1 error of $\alpha$ and all others have a marginal type 1 error of $0$. The same is true for a $p$-dimensional prediction region, coverage in one dimension is not less important than coverage in another. 

We can imagine a scenario where we construct a joint prediction region such that the overall coverage rate is $1 - \alpha$. The first $k - 1$ dimensions have marginal prediction intervals of $\mathbb{R}$, and the last has a marginal prediction set that has a coverage rate equal to $1 - \alpha$. This would be a practically useless prediction region. But, we can imagine a more realistic scenario where we disregard balance and have $k - 1$ prediction sets that have marginal coverage slightly less than $1$, and one prediction set that has marginal coverage slightly larger than $1 - \alpha$. Clearly, this would be undesirable from a practical standpoint \citep{Bruder_2017_Balanced}. 

We consider point regression, including mean regression, in Section~\ref{sec:mean_rectangle}, and detail our proposed algorithm in~\cref{alg:p_dim_mean}. A drawback of the point regression approach is that the conformal hyperrectangles with a fixed coverage rate are of identical (hyper)volume for any set of covariates. For conditionally heteroscedastic data, we would want the volume of the hyperrectangle to vary based on the observed covariates. We consider the use of quantile regression for handling conditional heteroscedascity in Section~\ref{sec:quant_rectangle}, and detail the algorithm in~\cref{alg:p_dim_quant}. For ease of exposition, we assume that $p$ = 3 in both cases, but it is easy to see how it could be extended to dimensions higher than 3. We discuss the validity and the balance of the proposed methods in Sections~\ref{sec:validity} and~\ref{sec:balance}. We illustrate the empirical performance of the proposed methods via simulations in Section~\ref{sec:simulation}, and a real application in Section~\ref{sec:real}. We conclude in Section~\ref{sec:conclusion}. 

\section{Conditional Point Regression} \label{sec:mean_rectangle}

We now describe conformal hyperrectangular regression (CHR). 
As with split conformal prediction, we begin by splitting our data into sets used for training and calibration \citep{split_conformal_lei_2016, conformal_book}. However, unlike classical split conformal prediction, we split using two calibration sets instead of one. The training set is indexed by $\mathcal{I}_{tr}$, the first calibration set by $\mathcal{I}_{cal1}$, and the second calibration set by $\mathcal{I}_{cal2}$. Given any point regression algorithm, $\vect{f}$, we fit a $p-dimensional$ function $\hat{\vect{f}}$ on the training set:
\[
\{\hat{\vect{f}}\} \leftarrow \vect{f}(\{(\vect{Y}_i, \vect{X}_i): i \in \mathcal{I}_{tr}\}).
\]

In the next step, for each of the $p$-dimensions of the response, we compute non-conformity scores. These scores will depend on the non-conformity measure one wishes to use; we'll use the absolute difference to demonstrate how to form CHR prediction regions. The scores are computed on the first calibration set:
\[
V_{i,j} = |\hat{f}_{j}(\vect{X}_{i,j}) - Y_{i,j}|, \text{ } \forall i \in \mathcal{I}_{cal1}.
\]
Define the vectors $\vect{V_l} = (V_{1l}, V_{2l}, \ldots, V_{n{cal_1}l}), \text{ } l = 1, \ldots, p$. Then, we form initial intervals based on the second calibration set for each $j = 1, \ldots, p$. The $p$th dimension's intervals take the form
\begin{equation}
C_p(\vect{X}_{k}) = [\hat{f}_p(\vect{X}_{k}) - Q_{1-\alpha}(\vect{V}_p; Z_{cal1}), \hat{f}_p(\vect{X}_{k}) + Q_{1-\alpha}(\vect{V}_p; Z_{cal1})], \text{ } \forall k \in \mathcal{I}_{cal2}
\label{eq:p_dim_mean_rect_formation}
\end{equation}
where,
\[
Q_{\delta}(\vect{V}; Z_{cal}) := (\delta)(1 + \frac{1}{|Z_{cal}|})-\text{th empirical quantile of }\{V_i\}
\]
The other $p-1$ prediction intervals take the same form with their corresponding non-conformity scores. Define the lower-bound of the prediction interval in \eqref{eq:p_dim_mean_rect_formation} as $\hat{q}_{j, \lo}(\vect{X}_k)$ and the upper-bound as $\hat{q}_{j, \hi}(\vect{X}_k)$. 

At this stage, in an effort to have some balance between the dimensions, we record the length of the $p$ sides. Let $A$ denote the length of the first interval, $B$ denote the length of the second interval, and $C$ the length of the third interval. If we use the absolute non-conformity score or the signed error non-conformity score, these lengths will be constant for all of the $ncal_1$ intervals. How to deal with non-conformity scores that give different sized prediction intervals for difference covariates is covered in~\cref{sec:quant_rectangle}.

Now, using the intervals found in \eqref{eq:p_dim_mean_rect_formation}, we compute a non-conformity score to make sure our rectangular prediction regions will have the coverage described in \eqref{eq:p_dim_response_coverage}. For each of the $p$-dimensions of the response, compute the following non-conformity scores:
\[
E_{k,j} = \max\{\hat{q}_{j, \lo}(\vect{X}_k) - Y_{k,j}, Y_{k,j} - \hat{q}_{j, \hi}(\vect{X}_k)\}, \ \forall k \in \mathcal{I}_{cal2} \text{ and } j = 1, \ldots, p.
\]
We could just use these as the final non-conformity scores, but to help with balance and reducing the size of the overall prediction regions, we now convert $E_{k,j}$ to the length of the first dimension using the side lengths we found earlier. Denote these converted scores as $A_{ik}$. For clarity, some examples of conversions would be $A_{11} = E_{11}$, $A_{12} = E_{12} \times \frac{A}{B}$, $A_{13} = E_{13} \times \frac{A}{C}$, etc. Now define $W_k = \max\limits_{j = 1, \ldots, p} A_{k,j}$. 

Our final steps are to find the adjustments for each of the dimensions, and construct the final prediction region. Define $Adj_1$ to be the  $(1 - \alpha)(1 + \frac{1}{|Z_{cal2}|})-$th empirical quantile of $\{W_k\}$. This is the final adjustment for the first (more generally, the reference) dimension. We can find the adjustments for the other dimensions by undoing the conversion to the reference side. For clarity, some examples of this with the first dimension as the reference would be, $Adj_2 = Adj_1 \times \frac{B}{A}$, $Adj_3 = Adj_1 \times \frac{C}{A}$, etc. Finally, given new input data, $\vect{X}_{n+1}$, the prediction region for $\vect{Y}_{n+1}$ is constructed as 
\[
\hat{\vect{C}}(\vect{X}_{n+1}) = [\hat{\vect{q}}_{\lo}(\vect{X}_{n+1}) - \vect{Adj}, \hat{\vect{q}}_{\hi}(\vect{X}_{n+1}) + \vect{Adj})].
\]
For reference, the procedure is summarized in \cref{alg:p_dim_mean}.

\begin{algorithm}
    \caption{P-dimensional Point Conformal Regression}\label{alg:p_dim_mean}
    \textbf{Input:} level $\alpha$, data = $\mathcal{Z} = (\vect{Y}_i, \vect{X}_i)_{i \in \mathcal{I}}$, score function $V$, test point $\vect{x}$, and regression algorithm $\vect{f}(\vect{X}; \mathcal{D})$ for the point estimate using $\mathcal{D}$ as data \newline
    \textbf{Procedure:}
    \begin{algorithmic}[1]

    \State Split $\mathcal{Z}$ into a training fold $\mathcal{Z}_{tr} \overset{\Delta}{=} (\vect{Y}_i, \vect{X}_i)_{i \in \mathcal{I}_{tr}}$ and two calibration folds $\mathcal{Z}_{cal1} \overset{\Delta}{=} (\vect{Y}_i, \vect{X}_i)_{i \in \mathcal{I}_{cal1}}$ and $\mathcal{Z}_{cal2} \overset{\Delta}{=} (\vect{Y}_i, \vect{X}_i)_{i \in \mathcal{I}_{cal2}}$
    \State Train the model $\vect{\hat{f}}(\vect{x}; Z_{tr})$ using the training set
    \State For each $i \in \mathcal{I}_{cal1}$ and each $j \in 1, \ldots, p$, compute the score $V_{i,j}$
    \State Find $p \times |\mathcal{Z}_{cal2}|$ $1 - \alpha$ conformal prediction intervals using the second calibration set's covariates. Define these intervals as $(\hat{q}_{j, \lo}(\vect{X}_k), \hat{q}_{j, \hi}(\vect{X}_k))$ for each $k \in \mathcal{I}_{cal2}$
    \State Find the $p - 1$ ratios of these intervals and define them as $\frac{A}{B} = \frac{\hat{q}_{11, \hi}(\vect{X}_1) - \hat{q}_{11, \lo}(\vect{X}_1)}{\hat{q}_{12, \hi}(\vect{X}_1) - \hat{q}_{12, \lo}(\vect{X}_1)}$, $\frac{A}{C} = \frac{\hat{q}_{11, \hi}(\vect{X}_1) - \hat{q}_{11, \lo}(\vect{X}_1)}{\hat{q}_{13, \hi}(\vect{X}_1) - \hat{q}_{13, \lo}(\vect{X}_1)}$, \ldots
    \State For each $k \in \mathcal{I}_{cal2}$ and each $j \in 1, \ldots, p$ compute the score $E_{k,j} = \max\{\hat{q}_{j, \lo}(\vect{X}_k) - Y_{k,j}, Y_{k,j} - \hat{q}_{j, \hi}(\vect{X}_k)\}$
    \State For each $k \in \mathcal{I}_{cal2}$ convert the $E_{k,j}$ to the first dimension using the ratios found in step 5. For example, $A_{11} = E_{11}$ and $A_{12} = E_{12} * \frac{A}{B}$
    \State Define $W_k = \max_{j} A_{k,j}$
    \State Find $Adj_1 = (1 - \alpha)(1 + \frac{1}{|Z_{cal2}|})-\text{th empirical quantile of }\{W_k\}$  
    \State Using the ratios from step 5, find $Adj_2 = Adj_1 * \frac{B}{A}$, $Adj_3 = Adj_1 * \frac{C}{A}$, \ldots
    \end{algorithmic}
    \textbf{Output:} $\hat{\vect{C}}(\vect{x}) = (\hat{\vect{q}}_{\lo}(\vect{x}) - \vect{Adj}, \hat{\vect{q}}_{\hi}(\vect{x}) + \vect{Adj})$, where $\hat{\vect{q}}_{\lo}(\vect{x})$ and $\hat{\vect{q}}_{\hi}(\vect{x})$ are the p-dimensional prediction sets for $\vect{x}$ defined in step 4 and $\vect{Adj}$ is the p-dimensional vector of adjustments found in steps 9 and 10

\end{algorithm}

\subsection{Initial Hyperrectangle Forming Methods}\label{sec:rect_forming_methods}

Here, we describe some existing methods for forming conformal prediction intervals with conditional mean estimates (more generally, point estimates). One of the most common non-conformity scores for a point estimate is the absolute difference, $V_i = |Y_i - \hat{f}(\vect{X}_i)|$ \citep{papadopoulos_2002, conformal_book}. The prediction interval in the case of the absolute difference the interval is
\[
C(\vect{X}_{n+1}) = [\hat{f}(\vect{X}_{n+1}) - Q_{1-\alpha}(\vect{V}; Z_{cal1}), \hat{f}(\vect{X}_{n+1}) + Q_{1-\alpha}(\vect{V}; Z_{cal1})]
\]
where 
\[
Q_{1-\alpha}(\vect{V}; Z_{cal1}) := (1-\alpha)(1 + \frac{1}{|Z_{cal1}|})-\text{th empirical quantile of }\{V_i\}.
\]
and $|Z_{cal1}|$ is the size of the calibration set. We refer to this method as Mean Absolute method in Section~\ref{sec:simulation}.

A second option that allows one to control the tail error rates is signed-conformal regression. Define the signed error non-conformity scores, $R_i = Y_i - \hat{f}(\vect{X}_i)$ and $V_i = -1 \times R_i$. Then, the signed error conformal prediction region (SECPR) is given by 
\[
C(\vect{X}_{n+1}) = [\hat{f}(\vect{X}_{n+1}) - Q_{1 - \alpha_1}(\vect{V}; Z_{cal1}), \hat{f}(\vect{X}_{n+1}) + Q_{1 - \alpha_2}(\vect{R}; Z_{cal1})]
\]
where 
$\alpha = \alpha_1 + \alpha_2.$
For example, if one wanted a conformal prediction interval with equal tailed errors, they could take $\alpha_1 = \alpha_2 = \alpha/2$ \citep{Linusson_2014_signed_conformal, romanocqr}. We refer to this method as Mean Signed in Section~\ref{sec:simulation}. 

\section{Quantile Regression} \label{sec:quant_rectangle}

We now describe an extension of conformal hyperrectangle for regression to non-constant prediction interval widths, conformal quantile hyperrectangular regression. As with other split conformal prediction methods, we begin by splitting our data into sets used for training and calibration. The training set is indexed by $\mathcal{I}_{tr}$ and the calibration set by $\mathcal{I}_{cal}$. Given any quantile regression algorithm, $\mathcal{\vect{A}}$, we fit 2 $\times$ $p$ conditional quantile functions $\hat{\vect{q}}_{\lo}(\vect{x})$ and $\hat{\vect{q}}_{\hi}(\vect{x})$ on the training set. Any quantiles will work as long as the quantile for $\hat{\vect{q}}_{\lo}(\vect{x})$ is smaller than that of $\hat{\vect{q}}_{\hi}(\vect{x})$. In practice, splitting $\alpha$ in two and giving the tails equal probability tends to work well.
\[
\{\hat{\vect{q}}_{\lo}, \hat{\vect{q}}_{\hi}\} \leftarrow \mathcal{\vect{A}}(\{(\vect{Y}_i, \vect{X}_i): i \in \mathcal{I}_{tr}\})
\]

We now form initial intervals for the calibration set based on the trained model. The intervals take the form
\begin{equation}
\vect{C}(\vect{X}_{k}) = [\hat{\vect{q}}_{\lo}(\vect{X}_k), \hat{\vect{q}}_{\hi}(\vect{X}_k)], \text{ } \forall k \in \mathcal{I}_{cal}.
\label{eq:p_dim_quant_rect_formation}
\end{equation}

To keep some balance, we record the length of the $p \times |\mathcal{I}_{cal}|$ sides. Let $A_i$ ($B_i$, $C_i$) denote the length of the first (second, third) interval for the $i$-th data pair in the calibration set. Note that the $A$'s are generally distinct real numbers and so are the $B$'s and $C$'s.  That is, the lengths of the sides for different covariates will generally not be the same, unlike in~\cref{sec:mean_rectangle}. 

Now, using the intervals found in \eqref{eq:p_dim_quant_rect_formation}, we compute a non-conformity score to make sure our rectangular prediction regions will have the coverage described in \eqref{eq:p_dim_response_coverage}. For each of the $p$-dimensions of the response, compute the following non-conformity scores:
\[
E_{k,j} = \max\{\hat{q}_{j, \lo}(\vect{X}_k) - Y_{k,j}, Y_{k,j} - \hat{q}_{j, \hi}(\vect{X}_k)\}, \ \forall k \in \mathcal{I}_{cal} \text{ and } j = 1, \ldots, p.
\]
We now convert $E_{k,j}$ to the length of the first dimension using the side lengths we found earlier. Denote these converted scores as $W_{k,j}$. For clarity, some examples of conversions would be $W_{11} = E_{11}$, $W_{12} = E_{12} \times \frac{A_1}{B_1}$, $W_{21} = E_{21}, W_{22} = E_{22} \times \frac{A_2}{B_2}$, etc. Now, define $W_k = \max\limits_{j = 1, \ldots, p} W_{k,j}$. 

Our final steps are to find the adjustments for each of the dimensions, and construct the final prediction region. First, we form the intervals for the data with unseen responses,
\begin{equation}
\vect{C}(\vect{X}_{n+1}) = [\hat{\vect{q}}_{\lo}(\vect{X}_{n+1}), \hat{\vect{q}}_{\hi}(\vect{X}_{n+1})].
\label{eq:p_quant_unadj}
\end{equation}

Now, define $Adj_1$ to be the  $(1 - \alpha)(1 + \frac{1}{|Z_{cal}|})-\text{th empirical quantile of }\{W_k\}$. This is the final adjustment for the first dimension. We can find the adjustments for the other dimensions by undoing the conversion to the first side. To do this, we'll need to find the lengths of the 1st, 2nd, 3rd, etc. sides in \eqref{eq:p_quant_unadj}. Denote these lengths as $A_{n+1}$, $B_{n+1}$, $C_{n+1}$, etc. Then, the final adjustments for the other sides would be $Adj_2 = Adj_1 \times \frac{B_{n+1}}{A_{n+1}}$, $Adj_3 = Adj_1 \times \frac{C_{n+1}}{A_{n+1}}$, etc. Finally, given new input data, $\vect{X}_{n+1}$, the prediction region for $\vect{Y}_{n+1}$ is constructed as 
\[
\hat{\vect{C}}(\vect{X}_{n+1}) = [\hat{\vect{q}}_{\lo}(\vect{X}_{n+1}) - \vect{Adj}, \hat{\vect{q}}_{\hi}(\vect{X}_{n+1}) + \vect{Adj})].
\]
For reference, the procedure is summarized in \cref{alg:p_dim_quant}.

\begin{algorithm}
    \caption{P-dimensional Quantile Conformal Regression}\label{alg:p_dim_quant}
    \textbf{Input:} level $\alpha$, data = $\mathcal{Z} = (\vect{Y}_i, \vect{X}_i)_{i \in \mathcal{I}}$, score function $V$, test point $\vect{x}$, and regression algorithm $\mathcal{\vect{A}}(\vect{X}; \mathcal{D})$ for the conditional quantiles using $\mathcal{D}$ as data \newline
    \textbf{Procedure:}
    \begin{algorithmic}[1]

    \State Split $\mathcal{Z}$ into a training fold $\mathcal{Z}_{tr} \overset{\Delta}{=} (\vect{Y}_i, \vect{X}_i)_{i \in \mathcal{I}_{tr}}$ and a calibration fold $\mathcal{Z}_{cal} \overset{\Delta}{=} (\vect{Y}_i, \vect{X}_i)_{i \in \mathcal{I}_{cal}}$ 
    \State Train the models $\hat{\vect{q}}_{\lo}(\vect{x}; Z_{tr})$ and $\hat{\vect{q}}_{\hi}(\vect{x}; Z_{tr})$ using the training set
    \State Find and record the length of the $p \times |\mathcal{Z}_{cal}|$ sides for the hyperrectangles based on the trained models from step 2. Denote these as $A_k$ for the first side length on the $k$th data set point, $B_k$ for the second side length on the $k$th data set point, etc.
    \State For each $k \in \mathcal{I}_{cal}$ and each $j \in 1, \ldots, p$, compute the score $E_{k,j} = \max\{\hat{q}_{j, \lo}(\vect{X}_{k}) - Y_{k,j}, Y_{k,j} - \hat{q}_{j, \hi} (\vect{X}_{k})\}$
    \State Convert the $E_{k,j}$ to the first dimension using the side lengths found in step 3. $W_{11} = E_{11}$, $W_{12} = E_{12} \times \frac{A_1}{B_1}$, $W_{21} = E_{21}, W_{22} = E_{22} \times \frac{A_2}{B_2}$, etc.
    \State Define $W_k = \max\limits_{j = 1, \ldots, p} W_{k,j}$
    \State Find $Adj_1 = (1 - \alpha)(1 + \frac{1}{|Z_{cal}|})-\text{th empirical quantile of }\{W_k\}$. This is the adjustment for the first dimension
    \State Find the lengths of the $p$ test point sides using the trained models from step 2 for the test point, $\vect{x}$. Denote the lengths of these sides as $A_{n+1}$, $B_{n+1}$, $C_{n+1}$, etc.
    \item The final adjustments for the other sides become $Adj_2 = Adj_1 \times \frac{B_{n+1}}{A_{n+1}}$, $Adj_3 = Adj_1 \times \frac{C_{n+1}}{A_{n+1}}$, etc.
    \end{algorithmic}
    \textbf{Output:} $\hat{\vect{C}}(\vect{x}) = (\hat{\vect{q}}_{\lo}(\vect{x}) - \vect{Adj}, \hat{\vect{q}}_{\hi}(\vect{x}) + \vect{Adj})$, where $\hat{\vect{q}}_{\lo}(\vect{x})$ and $\hat{\vect{q}}_{\hi}(\vect{x})$ are the p-dimensional prediction sets for $\vect{x}$ defined in step 2 and $\vect{Adj}$ is the p-dimensional vector of adjustments found in steps 7 and 9

\end{algorithm}

\subsection{Remarks}
One unanswered question is, what coverage rate should the initial hyperrectangle be formed with for both point and quantile regression? We recommended $1-\alpha$ in both~\cref{alg:p_dim_mean} and~\cref{alg:p_dim_quant}. We recommended $1 - \alpha$ and not $1 - \alpha/ \binom{p}{2}$ because it is easier for a maximum non-conformity score (which ours is) to expand rather than contract. If a smaller volume is preferred, and at least one dimension has a skewed error, one can do a grid search starting at a value less than $1 - \alpha$ and increasing up to $1 - \alpha/\binom{p}{2}$. As a general note, this search can also be done for quantile regression in the one-dimensional case to reduce the interval length.

A note on which dimension to use as the reference. First, we will assume that the ratio is constant, as it is in~\cref{sec:mean_rectangle}. Denote the non-conformity scores for the $k$-th observation from the second calibration set in the $j$-th dimension as $E_{k,j}$. The overall $k$-th non-conformity score with the first dimension as the reference is then,
\[
W^1_k = \max\Big\{E_{k,1}, E_{k, 2}\times \frac{A}{B}, \ldots, E_{k, p} \times \frac{A}{P}\Big\}.
\]
With the second dimension as the reference, the score would then be,
\[
W^2_k = \max\Big\{E_{k,1} \times \frac{B}{A}, E_{k, 2}, \ldots, E_{k, p} \times \frac{B}{P}\Big\}.
\]
With the last dimension as the reference, the score would then be,
\[
W^p_k = \max\Big\{E_{k,1} \times \frac{P}{A}, E_{k, 2} \times \frac{P}{B}, \ldots, E_{k, p}\Big\}.
\]
It is clear that, though $W^j_k$ will be different depending on which of the $j$ dimensions is the reference, the (potentially) scaled $E_{k, j}$ that is selected will be the same. That is, if $E_{k, 2} \times \frac{A}{B}$ was the maximum with the first dimension as the reference, $E_{k, 2}$ will be the maximum with the second dimension as the reference. So, the final score that is chosen when the first dimension is the reference will always be the same as the final score selected (scaled by a constant) when the $j$-th dimension is the reference. Now, let $W_q$ represent the chosen final non-conformity score. If the reference dimension is the first dimension, we have that,
\[
W^1_q = \max \Big\{E_{q, 1}, E_{q, 2} \times \frac{A}{B}, \ldots, E_{q, p} \times \frac{A}{P}\Big\}.
\]
With no loss of generality, assume that $E_{q, 1}$ is the maximum of the two. Then, the final adjustments are,
\[
Adj_1 = E_{q, 1},
\]
\[
Adj_2 = E_{q, 1} \times \frac{B}{A},
\]
and
\[
Adj_p = E_{q, 1} \times \frac{P}{A}.
\]
Next, assume that the second dimension was the reference. We then have that,
\[
W^2_q = \max \Big\{E_{q, 1} \times \frac{B}{A}, E_{q, 2}, \ldots, E_{q, p} \times \frac{B}{P} \Big\}.
\]
Because $E_{q, 1}$ was the maximum before, $E_{q, 1} \times \frac{B}{A}$ will be the maximum for the above and our final adjustments are,
\[
Adj_1 = E_{q, 1} \times \frac{B}{A} \times \frac{A}{B} = E_{q, 1},
\]
\[
Adj_2 = E_{q, 1} \times \frac{B}{A},
\]
and
\[
Adj_p = E_{q, 1} \times \frac{B}{A} \times \frac{P}{B} = E_{q, 1} \times \frac{P}{A}.
\]
Because the second dimension is general, the reference dimension does not change the results when the length ratios are a fixed quantity.

If our ratios are not the same, as in~\cref{sec:quant_rectangle}, then the reference dimension can make a minor difference. Though, we've found in simulation studies that the difference in volume between different reference dimensions is small. It ultimately depends on the ratio output by the quantile regression models that are being used. Because the reference dimension has a fixed length adjustment, regardless of how much it varies with the covariates, it does make some intuitive sense to set the ``most homoskedastic" dimension to be the reference. One potential approach to finding the most homoskedastic dimension is to look at the variability of the side lengths output in the calibration set. The less variable the side lengths are, the more homoskedastic the dimension should be, assuming a reasonably good model. 

\section{Validity of Conformal Hyperrectangles}\label{sec:validity}

\begin{theorem}\label{thm:rec_validity}
For both conformal hyperrectangular regression and conformal quantile hyperrectangular regression,
\[
\mathbb{P}(\vect{Y}_{n+1} \in \hat{\vect{C}}(\vect{X}_{n+1})) \geq 1 - \alpha
\]
\noindent If the $V_i$'s are almost surely distinct, then
\[
\mathbb{P}(\vect{Y}_{n+1} \in \hat{\vect{C}}(\vect{X}_{n+1})) \leq 1 - \alpha + \frac{1}{n_{cal} + 1}.
\]
\end{theorem}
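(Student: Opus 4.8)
The plan is to reduce both displays to the ordinary split-conformal argument applied to the final scores $W_k$, once we verify that the region produced by \cref{alg:p_dim_mean} (resp.\ \cref{alg:p_dim_quant}) contains $\vect{Y}_{n+1}$ precisely when a single scalar test score does not exceed an empirical quantile of the calibration scores. So the proof has two ingredients: an algebraic equivalence, and then the standard exchangeability/quantile lemma.

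First I would record the identity
\[
\vect{Y}_{n+1}\in\hat{\vect{C}}(\vect{X}_{n+1})\iff W_{n+1}\le Adj_1 ,
\]
where $W_{n+1}=\max_{j}W_{n+1,j}$, $W_{n+1,j}=E_{n+1,j}\,(L_1/L_j)$, $E_{n+1,j}=\max\{\hat{q}_{j,\lo}(\vect{X}_{n+1})-Y_{n+1,j},\,Y_{n+1,j}-\hat{q}_{j,\hi}(\vect{X}_{n+1})\}$, and $L_j$ is the length of the $j$-th side of the initial hyperrectangle at $\vect{X}_{n+1}$. Indeed, $\vect{Y}_{n+1}\in\hat{\vect{C}}(\vect{X}_{n+1})$ unwinds to $\hat{q}_{j,\lo}(\vect{X}_{n+1})-Adj_j\le Y_{n+1,j}\le\hat{q}_{j,\hi}(\vect{X}_{n+1})+Adj_j$ for every $j$, i.e.\ $E_{n+1,j}\le Adj_j$ for every $j$; since the algorithm sets $Adj_j=Adj_1\,(L_j/L_1)$, each such inequality is equivalent to $E_{n+1,j}\,(L_1/L_j)\le Adj_1$, i.e.\ to $W_{n+1,j}\le Adj_1$, and taking the maximum over $j$ gives the claim. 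This bookkeeping is the step to handle with care, because in the quantile case the ratios $L_j/L_1$ are covariate dependent; the point is that the very ratio used to define $Adj_j$ from $Adj_1$ is the one that inverts the conversion producing $W_{n+1,j}$ from $E_{n+1,j}$, so the region-membership event collapses exactly.

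Next I would install the conditioning that makes the scores exchangeable: for \cref{alg:p_dim_quant} condition on the training fold $\mathcal{Z}_{tr}$, and for \cref{alg:p_dim_mean} condition on $\mathcal{Z}_{tr}$ together with the first calibration fold $\mathcal{Z}_{cal1}$, so that the fitted functions and (in the point case) the constant side lengths and their ratios are deterministic. Conditionally, $W_k$ is one fixed measurable map applied to $(\vect{X}_k,\vect{Y}_k)$ over the relevant calibration fold, and $W_{n+1}$ is that same map applied to $(\vect{X}_{n+1},\vect{Y}_{n+1})$; these pairs are exchangeable and independent of the conditioning information, so $\{W_k\}_{k\in\mathcal{I}_{cal}}\cup\{W_{n+1}\}$ is exchangeable. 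Writing $n_{cal}$ for the size of that calibration fold and recalling that $Adj_1$ is its $(1-\alpha)(1+1/n_{cal})$-th empirical quantile, the usual conformal quantile lemma \citep{romanocqr} gives, conditionally, $\mathbb{P}(W_{n+1}\le Adj_1)\ge 1-\alpha$, and, when $\{W_k\}_{k\in\mathcal{I}_{cal}}\cup\{W_{n+1}\}$ has no ties almost surely, $\mathbb{P}(W_{n+1}\le Adj_1)\le 1-\alpha+1/(n_{cal}+1)$. Taking expectations over the conditioning via the tower property removes it, and combining with the first step yields both inequalities in the theorem. The distinctness hypothesis enters only to certify the no-ties condition on the final scores, which is implied by almost-sure distinctness of the underlying $V$-type scores (e.g.\ whenever the response has a continuous conditional distribution); I would add a sentence making that reading explicit.

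The main obstacle I expect is not the conformal lemma, which is routine, but rather two points of care. First, the reparametrization bookkeeping in the equivalence step: one must confirm that ``convert to the reference dimension, take a maximum, then un-convert'' is a genuine monotone change of variables for each $j$, so that membership really does reduce to $W_{n+1}\le Adj_1$ with no inequality slack. Second, in the point-regression case one must apply the exchangeability argument to the correct fold: $\mathcal{Z}_{cal1}$ has already been spent forming the initial intervals and must be frozen by conditioning, with only $\mathcal{Z}_{cal2}$ playing the role of the calibration sample in the final conformal step; conflating the two would break either exchangeability or the identification of $n_{cal}$.
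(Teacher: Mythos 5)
Your proposal is correct and follows essentially the same route as the paper: establish the equivalence $\vect{Y}_{n+1}\in\hat{\vect{C}}(\vect{X}_{n+1})\iff W_{n+1}\le Adj_1$ by unwinding the convert/un-convert bookkeeping, then apply the inflation-of-quantiles lemma to the exchangeable final scores $\{W_k\}$. Your explicit conditioning on $\mathcal{Z}_{tr}$ (and on $\mathcal{Z}_{cal1}$ in the point-regression case) and the remark that the distinctness hypothesis should really be read as a no-ties condition on the $W_k$'s are slightly more careful than the paper's own write-up, but they do not change the argument.
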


\noindent The proof is provided in~\cref{prf:rec_validity}.

\section{Balance}\label{sec:balance}
Here, we introduce two assumptions under which the proposed methods can achieve balance in some sense. Throughout this section, we assume that $(\vect{X}_i, \vect{Y}_i) \overset{i.i.d.}{ \sim} P$.
\newline 
{\bf Assumption ICED}: There exist $\mu_i, c_i, i=1,\ldots,p$, where the $c$'s are positive constants and the $\mu$'s are functions of $\vect{X}$ such that $c_1\{Y_1 - \mu_1(\vect{X})\}|\vect{X}\sim c_2\{Y_2 - \mu_2(\vect{X})\}|\vect{X}\sim \cdots \sim c_p\{Y_p - \mu_p(\vect{X})\}|\vect{X}$, that is,  the conditional distributions of $c_i\{Y_i - \mu_i(\vect{X})\}$ given $\vect{X}$ are identical for $i=1,\ldots,p$, almost surely.
\newline
{\bf Assumption CQ}: Assume the ICED assumption holds and denote $G$ as the common distribution of $c_i\{Y_i-\mu_i(\vect{X})\}$. Assume that the quantile function of  $G$ is a continuous function.

The first assumption will be simply referred to as the identical conditional error distribution (ICED) assumption, while the second assumption is referred to as the continuous quantile (CQ) assumption. 

We now introduce the concept of asymptotic balance for conformal hyperrectanges.
\begin{definition}
Asymptotic Balance

\noindent The hyperrectangle $\prod_j  \hat{C}_j$ is said to achieve asymptotic balance as $n_{cal}$ and $n_{tr} \to \infty$ if
\[
\max\limits_j \{|\eta_j - \frac{1}{p}\sum\limits_{i = 1}^p \eta_i|\} \to 0 \text{ in probability},
\]
where $\eta_j := P(Y_j \notin \hat{C}_j), \ \forall j = 1, \ldots , p$, where $\hat{C}_j$ is the marginal conformal prediction interval for the $j$-th dimension.

\end{definition}

We now state the main results whose proofs are given in~\cref{prf:quant_rec_bal}.
\begin{theorem}\label{thm:quant_rec_bal}
    For $\vect{X} = \vect{x} \in \mathcal{C}$, where $\mathcal{C}$ is a compact set, under the ICED assumption when the initial unconformalized hyperrectangle is formed with a consistent estimator for the conditional quantiles (that is, any conditional $\alpha\times 100\%$-quantile estimate with fixed   $\alpha\in (0,1)$ is  uniformly consistent for $\vect{x}\in \mathcal{C}$),  conformal quantile hyperrectangular regression achieves asymptotic balance between dimensions for $\alpha \in (0, 1)$ as $n_{tr} \to \infty$.
\end{theorem}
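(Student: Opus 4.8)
The plan is to reduce the asymptotic-balance claim to a continuity/consistency argument: show that the marginal non-coverage probabilities $\eta_j = P(Y_j \notin \hat C_j)$ all converge in probability to a common limit, so their maximum deviation from the average vanishes. First I would set up notation under the ICED assumption: write $\varepsilon_i := c_i\{Y_i - \mu_i(\vect X)\}$, so that $\varepsilon_1|\vect X \sim \cdots \sim \varepsilon_p|\vect X$ with common conditional law $G(\cdot\mid\vect X)$. The idealized (oracle) hyperrectangle uses the true conditional quantiles $q_{j,\lo}(\vect x), q_{j,\hi}(\vect x)$, whose side-length ratios are exactly $c_j^{-1}/c_1^{-1}$ by ICED; hence an additive adjustment $\ell$ in the reference dimension, rescaled by the ratios, is equivalent to shifting each dimension's tail by the same amount in the $\varepsilon_i$-scale. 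This is the content of the commented-out oracle theorem, and I would re-establish it as a lemma: with oracle quantiles, each $\eta_j$ equals the same function of the conformal adjustment, so balance holds exactly.

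Next I would handle the plug-in error. By hypothesis the estimated conditional quantiles $\hat q_{j,\lo}(\vect x), \hat q_{j,\hi}(\vect x)$ are uniformly consistent for $\vect x \in \mathcal C$ as $n_{tr}\to\infty$. I would argue in two steps. (i) The estimated side-length ratios $\hat A_k/\hat B_k$, etc., converge uniformly (over $\vect x\in\mathcal C$) to the oracle ratios $c_j^{-1}/c_1^{-1}$; since these are bounded away from $0$ and $\infty$ on the compact set, the conversion factors applied to the $E_{k,j}$ scores are uniformly close to the oracle conversion factors. (ii) Consequently the conformalized scores $W_k$ and the empirical quantile $Adj_1$ converge to their oracle analogues; feeding these back through the (uniformly convergent) ratios gives $\vect{Adj}$ close to the oracle adjustment, and the endpoints $\hat q_{j,\lo}(\vect x) - Adj_j$, $\hat q_{j,\hi}(\vect x) + Adj_j$ converge uniformly to the oracle endpoints. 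Then $\eta_j = P(Y_j \notin \hat C_j)$ differs from its oracle value by $o_p(1)$, using the CQ assumption (continuity of the quantile function of $G$, equivalently continuity of $G$) to convert uniform closeness of endpoints into closeness of the probability masses they cut off — a standard argument via continuity of $G$ and dominated convergence, after conditioning on $\vect X = \vect x$ and integrating over $\mathcal C$. Combining with the oracle lemma, all $\eta_j$ share a common limit, so $\max_j |\eta_j - \frac1p\sum_i \eta_i| \to 0$ in probability.

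The main obstacle I expect is step (ii): controlling the conformal quantile $Adj_1$ — an empirical quantile of the data-dependent, estimator-dependent $W_k$'s — and showing it converges to the oracle conformal adjustment. This requires care because the $W_k$ are not i.i.d.\ in the usual sense (they depend on the trained model through $\hat{\vect f}$ / $\hat{\vect q}$), so I would condition on the training fold, exploit exchangeability of the calibration fold given training, use the uniform consistency to sandwich each $W_k$ between perturbations of the oracle scores, and then invoke continuity of the limiting distribution of the oracle $W$ (inherited from the CQ assumption, via $G$) to push the empirical-quantile convergence through. A secondary technical point is that the ratios in the quantile case vary with $\vect x$, so the ``reference dimension'' bookkeeping from the Remarks section must be carried along uniformly; the compactness of $\mathcal C$ and the boundedness of the oracle ratios are what make this manageable.
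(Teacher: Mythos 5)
Your core reduction is the same as the paper's: show the estimated side-length ratios $\hat{c}_j(\vect{x})$ converge uniformly (over the compact set $\mathcal{C}$) to the oracle ratios $c_j$, bound the probability of the symmetric difference between the oracle interval expanded by $\ell/c_j$ and the plug-in interval expanded by $\ell/\hat{c}_j(\vect{x})$, and then use ICED to conclude that every $\eta_j$ converges in probability to a common miscoverage $\gamma$, which gives asymptotic balance. Where you diverge is in how the conformal adjustment is handled. You treat the convergence of $Adj_1$ --- the empirical $(1-\alpha)(1+1/|Z_{cal}|)$-quantile of the data-dependent scores $W_k$ --- to an oracle adjustment as the central obstacle, and sketch a conditioning-plus-exchangeability argument to resolve it. The paper sidesteps this entirely: it proves the symmetric-difference statement for an \emph{arbitrary fixed constant} $\ell$, and only at the very end remarks that ``because $\ell$ is a general constant, we can let it be the conformal adjustment.'' This is lighter but arguably looser, since the adjustment is random rather than a fixed constant; your route, if carried through, would patch that gap, at the cost of the extra work you correctly identify (empirical quantiles of non-i.i.d., estimator-dependent scores). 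One caution: you invoke the CQ assumption to convert endpoint closeness into closeness of coverage probabilities, but Theorem~\ref{thm:quant_rec_bal} assumes only ICED (CQ is reserved for Theorem~\ref{thm:point_rec_bal}); the paper's proof implicitly needs the same no-atoms-at-the-endpoints continuity when it lets $\delta_{\lo},\delta_{\hi},\delta_j\to 0$, so you have identified a hypothesis the statement arguably omits, but as written your proof uses an assumption you are not granted.
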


\begin{theorem}\label{thm:point_rec_bal}
    For $\vect{X} = \vect{x} \in \mathcal{C}$, where $\mathcal{C}$ is a compact set, under the  CQ assumption when the initial unconformalized hyperrectangle is formed with an estimator for $\mu_j$ such that $ \sup_{x \in \mathcal{C}} |\hat{\mu}_j(\vect{x}) - \mu_j(\vect{x})| \overset{a.s.}{\to} 0 \text{ as } n_{tr} \to \infty \ \forall j$ and an initial non-conformity score of $V_{i,j} = |Y_{i,j} - \mu_j(\vect{X})|$, conformal hyperrectangular regression achieves asymptotic balance between dimensions for $\alpha \in (0, 1)$, as  $n_{tr}\to \infty$ and $n_{cal1} \to \infty$. 
\end{theorem}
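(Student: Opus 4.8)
\noindent The plan follows the same template as the quantile result (Theorem~\ref{thm:quant_rec_bal}): show that every data-dependent ingredient of $\hat{\vect C}$ stabilizes as the sample sizes grow, then read off a limiting marginal miscoverage that does not depend on the coordinate. Fix the test point $\vect X_{n+1} = \vect x \in \mathcal C$ and take the first coordinate as reference, so $c_1 = 1$. By the ICED assumption the variables $Z_j := c_j\{Y_j - \mu_j(\vect X)\}$ are identically distributed, hence so are $|Z_j|$; write $H$ for their common distribution and recall that under CQ its quantile function is continuous at the levels used below. Set $\hat v_j := Q_{1-\alpha}(\vect V_j; Z_{cal1})$, the common half-width of the initial dimension-$j$ interval, $\hat c_j := \hat v_1/\hat v_j$, the rescaling ratio, and let $Adj_1$ be the final reference adjustment.

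The three convergences I would establish are: \emph{(i)} $c_j \hat v_j \overset{P}{\to} H^{-1}(1-\alpha)$, hence $\hat v_j \overset{P}{\to} v_j := H^{-1}(1-\alpha)/c_j$; \emph{(ii)} $\hat c_j \overset{P}{\to} v_1/v_j = c_j$; \emph{(iii)} $Adj_1 \overset{P}{\to} a^\ast$ for a constant $a^\ast$. For (i), conditioning on $\mathcal I_{tr}$ freezes $\hat\mu_j$ and makes the calibration pairs $\{(\vect X_i, Y_{i,j})\}_{i\in \mathcal I_{cal1}}$ i.i.d.\ with $V_{i,j} = |Y_{i,j} - \hat\mu_j(\vect X_i)|$ (as in Algorithm~\ref{alg:p_dim_mean}); since $\sup_{\vect x\in\mathcal C}|\hat\mu_j(\vect x) - \mu_j(\vect x)| =: \delta_{n_{tr}} \overset{a.s.}{\to} 0$, these scores differ from $|Y_{i,j} - \mu_j(\vect X_i)|$ (whose law has distribution function $H(c_j\,\cdot)$) by at most $\delta_{n_{tr}}$. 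Glivenko--Cantelli, a $\delta_{n_{tr}}$-fattening, and continuity of $H$ give uniform a.s.\ convergence of the empirical distribution of $\{V_{i,j}\}$ to $H(c_j\,\cdot)$, and since $(1-\alpha)(1+1/n_{cal1})\to 1-\alpha$ the corresponding empirical quantile converges, which is (i); (ii) is then immediate.

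For (iii) I would first note that $\hat q_{j,\lo}(\vect X_k) = \hat\mu_j(\vect X_k) - \hat v_j$ and $\hat q_{j,\hi}(\vect X_k) = \hat\mu_j(\vect X_k) + \hat v_j$ give $E_{k,j} = |Y_{k,j} - \hat\mu_j(\vect X_k)| - \hat v_j$, so the converted score is $A_{k,j} = \hat c_j\big(|Y_{k,j} - \hat\mu_j(\vect X_k)| - \hat v_j\big)$ and $W_k = \max_j A_{k,j}$. Conditioning on $\mathcal I_{tr}\cup\mathcal I_{cal1}$ and using (i), (ii) and uniform consistency of $\hat\mu_j$, the i.i.d.\ variables $\{W_k\}_{k\in\mathcal I_{cal2}}$ have a law converging to that of $\max_j\{|Z_j| - H^{-1}(1-\alpha)\}$; this limit law is continuous (a maximum of finitely many continuously-distributed variables), so the empirical $(1-\alpha)(1+1/n_{cal2})$-quantile $Adj_1$ converges in probability to its $(1-\alpha)$-quantile $a^\ast$. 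Since $Adj_j = Adj_1\,\hat v_j/\hat v_1$, the marginal interval $\hat C_j$ has half-width $\hat v_j(1+Adj_1/\hat v_1)$, so
\[
\eta_j = P\big( |c_j\{Y_{n+1,j} - \mu_j(\vect x)\} + c_j\{\mu_j(\vect x) - \hat\mu_j(\vect x)\}| > c_j \hat v_j(1 + Adj_1/\hat v_1) \mid \text{data} \big).
\]
By (i)--(iii), $c_1 = 1$ and $\sup_{\vect x}|\hat\mu_j(\vect x) - \mu_j(\vect x)|\to 0$, the threshold converges in probability to $H^{-1}(1-\alpha) + a^\ast$ while the perturbation term vanishes; since $c_j\{Y_{n+1,j} - \mu_j(\vect x)\}$ has, conditionally on $\vect X_{n+1} = \vect x$, the same (ICED) continuous (CQ) distribution for every $j$, a Slutsky-type step yields $\eta_j \overset{P}{\to} 1 - H_{\vect x}(H^{-1}(1-\alpha) + a^\ast)$, a limit free of $j$, so $\max_j|\eta_j - \tfrac1p\sum_i \eta_i| \overset{P}{\to} 0$.

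The hard part will be (iii) together with the error bookkeeping. Showing the data-dependent $Adj_1$ converges to a deterministic constant needs the limiting law of the coordinatewise maximum $W_k$ to be not merely continuous (automatic) but also without a flat stretch of its distribution function at level $1-\alpha$; establishing this, and more generally matching the exact regularity supplied by CQ (continuity of the quantile function) to each place where an empirical quantile is passed to a population quantile, is the delicate point. Everything else reduces to a chain of ``$\epsilon$-fattening plus continuity'' steps, but these must now be propagated through three nested layers of randomness --- $\hat\mu_j$ from $\mathcal I_{tr}$, $(\hat v_j, \hat c_j)$ from $\mathcal I_{cal1}$, and $Adj_1$ from $\mathcal I_{cal2}$ --- so care is needed with the joint limit $n_{tr}, n_{cal1}, n_{cal2} \to \infty$.
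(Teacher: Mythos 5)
Your proposal is correct in outline and follows the same skeleton as the paper's proof: reduce to the reference dimension, show $\hat v_j\to v_j$ and $\hat c_j\to c_j$, and then transfer the common (ICED) oracle miscoverage to the estimated intervals via a symmetric-difference/Slutsky argument, concluding that $\eta_j$ has a limit free of $j$. Two points of genuine difference are worth recording. First, for the convergence of the initial empirical quantiles your route is Glivenko--Cantelli plus a $\delta_{n_{tr}}$-fattening, whereas the paper goes through weak convergence of the empirical measure of the residuals (\cref{lemma:convergence_distribution}, proved via bounded Lipschitz test functions, \cref{lemma:lipschitz_weak_convergence}) combined with \cref{lemma:continuity_points}; these are interchangeable, and yours is arguably more elementary. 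Second, and more substantively, you insist on proving step \emph{(iii)}, that the data-dependent adjustment $Adj_1$ converges in probability to a deterministic constant $a^\ast$, and you correctly flag that this requires the limiting law of $W_k=\max_j A_{k,j}$ to have no flat stretch at level $1-\alpha$ --- a regularity condition that CQ (continuity of the quantile function of the \emph{marginal} residual law $G$) does not by itself deliver for the joint maximum. The paper avoids this issue entirely: it proves the symmetric-difference bound for an arbitrary fixed constant $\ell$ and then asserts ``because $\ell$ is a general constant, we can set it to be the final conformal adjustment.'' That one-liner glosses over the randomness of $Adj_1$, but the paper's route shows that convergence of $Adj_1$ to a constant is not actually needed: since the per-$\ell$ bound degenerates uniformly over $\ell$ in compact sets (the $\delta$'s and $\epsilon$'s can be chosen uniformly there) and $Adj_1$ is tight, the differences $\eta_j-\eta_{j'}$ still vanish. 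So your plan is sound, but you have made the hard part harder than it needs to be; if you keep route \emph{(iii)} you must either add the no-flat-stretch hypothesis on the law of $\max_j|Z_j|$ or replace \emph{(iii)} by the weaker ``uniform-in-$\ell$ plus tightness'' argument.
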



The preceding results show that, under the ICED assumption or the CQ assumption and with a large training and calibration sample size and a reasonably good algorithm, it is expected that the hyperrectangles' coverage rates are expected to be reasonably balanced. Under the same assumptions, we could use a Bonferroni correction to achieve balance, but our methods achieve balance as well as the desired nominal coverage thanks to~\cref{thm:rec_validity}. A Bonferroni correction would overcover both marginally and overall. In otherwords, a Bonferroni correction is marginally (and overall) conservative, while our methods allow us to find the right marginal coverage rates that lead to tight overall coverage. This is corroborated by the numerical results given in Section~\ref{sec:simulation}. 

Next, we look at a favorable property of balance besides fairness. If instead of the ICED assumption, we make the much stronger assumption that the responses between dimensions are conditionally independent and identically distributed, that is, $Y_j |\vect{X}$ are i.i.d.  


Consider a hyperrectangle of the form $\prod_{i=1}^p R_i$. Suppose the average marginal conditional coverage rate of the the marginal prediction sets $R_1, \ldots, R_p$ is equal to a fixed constant, say, $\tau>0$, that is, $\sum_{j = 1}^p P(Y_j \in R_j|\vect{X})\equiv p\tau$. We claim that the maximum joint coverage probability of  such hyperrectangles equals $\tau^p$, which is attained if the hyperrectangle is furthermore balanced, that is,  $\forall j, \ P(Y_j \in R_j|\vect{X})\equiv \tau$. To see this, by independence we have that,
\[
1 - \alpha = P(Y_1 \in R_1, \ldots, Y_p \in R_p|\vect{X})
\]
\[
=\prod_{j = 1}^p P(Y_j \in R_j|\vect{X}).
\]
Because the geometric mean is not greater  than the arithmetic mean, it is furthermore
\begin{equation*}\label{eq:balance_idea}
\leq  \Bigg[\sum_{j = 1}^p \frac{P(Y_j \in R_j|\vect{X})}{p} \Bigg]^p,
\end{equation*}
with equality if and only if $P(Y_j \in R_j|\vect{X})$ are the same $\forall j$, which must then be $\tau$. This completes the proof of the claim. 
By achieving balance under the independence between dimensions assumption, we are minimizing the sum of marginal coverage rates while maintaining the overall coverage of the prediction region. This allows our prediction intervals to be sharper than other methods with the same overall coverage. 

The inequality that the product of coverage probabilities is less than the joint probability was proven in \cite{sidak_1967_square_normal} for Normal random variables. \cite{Dykstra_cov_bound1973} extended it to responses that are conditionally independent and identically distributed between dimensions, the conditioning variable need not be observed, so the case where the response variables are (conditionally) exchangeable between dimensions applies to the bound:
\begin{equation}\label{eq:joint_leq_prod}
P(Y_1 \in R_1, \ldots, Y_p \in R_p|\vect{X}) \geq \prod_{j = 1}^p P(Y_j \in R_j|\vect{X}).
\end{equation}
Indeed, even when the independence assumption is broken, our simulation results show that when the responses are exchangeable between dimensions, marginal coverage rates are just slightly below $(1 - \alpha) ^ {1/p}$. 

As a note of interest, \cite{Dykstra_cov_bound1973} first observed that as the probabilities become large, the product of the marginal coverage probabilities approaches the joint coverage probability when the marginal probabilities are pairwise negatively dependent. Formally that is,
\begin{equation*}\label{eq:log_bound_marg_prod}
|P(Y_1 \in R_1, \ldots, Y_p \in R_p|\vect{X}) - \prod_{j = 1}^p P(Y_j \in R_j|\vect{X})| \leq \frac{p - 1}{2p} \Big[ \log \Big\{\prod_{j = 1}^p P(Y_j \in R_j|\vect{X})\Big\}\Big]^2.
\end{equation*}
Marginal coverage probabilities that make up a joint prediction set are not necessarily pairwise negatively dependent, though, they can be in the exchangeable response case.

\section{Simulation Studies}\label{sec:simulation}
\subsection{Comparison of Four Methods}
We looked at three dimensional responses with a varying range of correlation between the error terms. In each simulation, the model was correctly specified. We compared five approaches to forming the conformal prediction sets. The first two used the conditional mean approach found in~\cref{sec:mean_rectangle}. Data were split as follows: 500 for training, 250 for the first calibration, and 250 for the second calibration. To form the initial hyperrectangles, we used the absolute difference non-conformity score (Mean Absolute) and the signed error non-conformity score (Mean Signed) with equal tails. 
The third approach was our proposed quantile regression approach found in~\cref{sec:quant_rectangle} (Quantile), with equal tails for the initial quantile model. Data were split; 500 for training and 500 for calibration. The final was the maximum absolute difference, $V = \max\limits_{j = 1, \ldots, q}|Y_j - \hat{f}_j(\vect{X})|$ (Absolute Max), where $\hat{f}_j(\vect{X})$ was an estimate of the conditional mean. Data were split; 500 for training and 500 for calibration. All results were rounded to three decimal places, with simulation standard errors in parentheses if they were larger than 0.0005. 


The data were generated as follows, $X_1 \sim \text{exp}(\lambda = 0.2)$, $X_2 \sim \text{Unif}(-5, 5)$, $E(Y_1|\vect{X}) = 5 + 2X_1$, $E(Y_2|\vect{X}) = 3X_1 + X_1X_2$, and $E(Y_3|\vect{X}) = X_2^2$. The errors for the responses were generated as $\epsilon_1 \sim \text{Gamma}(\alpha = 2, \lambda = 0.2)$, $\epsilon_2 \sim \text{Gamma}(\alpha = 3, \lambda = 0.5)$, and $\epsilon_3 \sim \mathcal{N}(5X_2, 1)$ where $\lambda$ represents the rate parameter and $\alpha$ represents the shape parameter. We started off with the uncorrelated errors for all of the responses that are listed above, $\vect{\epsilon} = (\epsilon_1, \epsilon_2, \epsilon_3)^T$, and a desired correlation matrix between them, $R$. To generate the correlated errors, we multiplied $\vect{\epsilon}$ by the upper-triangular Cholesky decomposition of $R$. For each simulation, the mean or quantile structure of the model was correctly specified. All parameters in the specified model were estimated. The miscoverage rate, $\alpha$, was set to be 0.10 in all simulations. The simulation size was $1,000,000$ in all simulations. In the four simulations the error correlation matrix was
\[
R_1 = \begin{bmatrix}
1 & 0.3 & 0.6\\
0.3 & 1 & 0.5 \\
0.6 & 0.5 & 1
\end{bmatrix}\,\,
R_2 = \begin{bmatrix}
1 & 0.8 & 0.8\\
0.8 & 1 & 0.8 \\
0.8 & 0.8 & 1
\end{bmatrix}\,\,
R_3 = \begin{bmatrix}
1 & 0.6 & 0.3\\
0.6 & 1 & 0.5 \\
0.3 & 0.5 & 1
\end{bmatrix}\,\,
R_4 = \begin{bmatrix}
1 & 0.2 & 0.2\\
0.2 & 1 & 0.2 \\
0.2 & 0.2 & 1
\end{bmatrix}.
\]

The results of the simulations are given in Tables \ref{tab:p_dim_sim1} - \ref{tab:p_dim_sim4}. Coverage is the average coverage over all simulations, volume is the average volume. Len i and Marg i represent the length of the interval in the i-th dimension and the marginal coverage in the i-th dimension, respectively. 

\begin{landscape}
\begin{table}
\begin{center}
    \begin{tabular}{|c|c|c|c|c|c|c|c|c|}
        \hline
         Approach &  Coverage & Volume & Len 1 & Len 2 & Len3 & Marg 1 & Marg 2 & Marg 3\\
         \hline
         Mean Absolute  & 0.899 & 23,499.007 (4.656) &25.144 & 15.360 & 60.307 & 0.939 & 0.954 & 0.983\\
         Mean Signed  & 0.900 & 21,008.494 (2.377) & 25.711 & 14.776 & 55.242 &0.959 & 0.950 & 0.968\\

        Quantile  & 0.900 & 15,274.272 (2.224) & 25.199 & 14.409 & 42.165 &0.957 & 0.943 & 0.968\\
         Absolute Max & 0.900 & 110,634.273 (8.948) &47.972 & 47.972 & 47.972 & 0.991 & 1.000 & 0.906\\

         \hline
    \end{tabular}
    \caption{Simulation Result: Setup Number 1}
    \label{tab:p_dim_sim1}
\end{center}
\end{table}

\begin{table}
\begin{center}
    \begin{tabular}{|c|c|c|c|c|c|c|c|c|}
        \hline
         Approach &  Coverage & Volume & Len 1 & Len 2 & Len 3 & Marg 1 & Marg 2 & Marg 3\\
         \hline
         Mean Absolute  & 0.899 & 28,927.063 (5.099) & 23.457 & 21.044 & 58.178 & 0.930 & 0.937 & 0.965\\
         Mean Signed  & 0.900 & 29,657.267 (3.652) & 24.841 & 21.556 & 55.271 &0.953 & 0.944 & 0.952\\

        Quantile  & 0.900 & 22,086.655 (3.165) & 24.120 & 20.885 & 43.764 &0.946 & 0.934 & 0.951\\
         Absolute Max & 0.900 & 121,806.996 (10.909) &49.527 & 49.527 & 49.527 & 0.992 & 0.997 & 0.904\\

         \hline
    \end{tabular}
    \caption{Simulation Result: Setup Number 2}
    \label{tab:p_dim_sim2}
\end{center}
\end{table}
\end{landscape}

\begin{landscape}
\begin{table}
\begin{center}
    \begin{tabular}{|c|c|c|c|c|c|c|c|c|}
        \hline
         Approach &  Coverage & Volume & Len 1 & Len 2 & Len 3 & Marg 1 & Marg 2 & Marg 3\\
         \hline
         Mean Absolute  & 0.899 & 24,423.676 (4.319) &23.804 & 18.501 & 55.077 & 0.932 & 0.943 & 0.980\\
         Mean Signed  & 0.899 & 23,606.321 (2.528) &24.883 & 18.357 & 51.620 & 0.954 & 0.942 & 0.968\\

        Quantile  & 0.900 & 16,050.704 (2.470) &24.372 & 17.941 & 36.675 & 0.949 & 0.935 & 0.969\\
         Absolute Max & 0.900 & 100,115.258 (7.246) &46.408 & 46.408 & 46.408 & 0.990 & 0.999 & 0.908\\

         \hline
    \end{tabular}
    \caption{Simulation Result: Setup Number 3}
    \label{tab:p_dim_sim3}
\end{center}
\end{table}

\begin{table}
\begin{center}
    \begin{tabular}{|c|c|c|c|c|c|c|c|c|}
        \hline
         Approach &  Coverage & Volume & Len 1 & Len 2 & Len 3 & Marg 1 & Marg 2 & Marg 3\\
         \hline
         Mean Absolute  & 0.899 & 21,768.531 (4.820) &25.522 & 14.445 & 58.368 & 0.941 & 0.954 & 0.993\\
         Mean Signed  & 0.899 & 18,133.270 (2.036) &25.441 & 13.365 & 52.243 & 0.958 & 0.948 & 0.983\\

        Quantile  & 0.900 & 12,000.470 (1.938) &25.068 & 13.410 & 35.678 & 0.956 & 0.943 & 0.985\\
         Absolute Max & 0.900 & 96,679.263 (6.951) &45.874 & 45.874 & 45.874 & 0.990 & 1.000 & 0.909\\

         \hline
    \end{tabular}
    \caption{Simulation Result: Setup Number 4}
    \label{tab:p_dim_sim4}
\end{center}
\end{table}
\end{landscape}

We can see from the simulation results that all four methods tended to achieve the desired $90\%$ overall coverage. The absolute maximum approach has poor length and volume compared to the other methods, and the balance (spreading out the miscoverage between dimensions) is poor when the variance of the error term for at least one dimension is much larger than other dimensions. The signed and quantile approaches tend to have the best balance, but because of the heteroskedastic error for the third response, the quantile rectangular approach does a significantly better job with respect to the volume and the length of the intervals.  

\subsection{Ten-dimensional Response}
A second simulation was done for a ten-dimensional response. The simulation setup is given in~\cref{tab:10_dim_setup1} and~\cref{tab:10_dim_setup2}. The results are given in~\cref{tab:10_dim_results}. Only the quantile method of forming the rectangular prediction region was used. Data were split; 500 for training and 500 for calibration. All of the $\epsilon$ terms were generated from a multivariate Normal distribution with a mean of 0, a variance of 1, and a correlation of 0.5. That is, $\epsilon_i \sim \mathcal{N}(0, 1) \text{ } i = 1, \ldots, 10$ and $corr(\epsilon_i, \epsilon_j) = 0.5$ for $i \neq j$ and $i, j = 1, \ldots, 10$. All quantile models were fit with the correctly specified conditional quantiles, which can be found in~\cref{tab:10_dim_setup2}. We can also see that some of the error terms had a heteroskedastic term added to them. The miscoverage rate, $\alpha$, was set to be 0.10. The simulation size was $1,000,000$. The results of the simulation can be found in~\cref{tab:10_dim_results}.

\begin{table}
\begin{center}
    \begin{tabular}{|c|c|c|c|}
        \hline
         Variable &  Distribution & Parameters \\
         \hline
         $X_1$ & Uniform & min = -2; max = 5\\
         $X_2$ & Uniform & min = -5; max = -1\\
        $X_3$ & Uniform & min = -6; max = 10\\
         $X_4$ & Uniform & min = 0; max = 4\\
         $X_5$ & Uniform & min = $X_2$; max = $X_4$\\
         \hline
    \end{tabular}
    \caption{Simulation Setup: Ten Dimensions}
    \label{tab:10_dim_setup1}
\end{center}
\end{table}

\begin{table}
\begin{center}
    \begin{tabular}{|c|c|c|c|}
        \hline
         Variable & Conditional Mean & Error \\
         \hline
         $Y_1$ & $2X_1$ & $\epsilon_1$ \\
         $Y_2$ & $X_1 + X_1 X_2$ & $\epsilon_2$\\
        $Y_3$ & $X_2^2$ & $\epsilon_3 + 5X_2$ \\
         $Y_4$ & $X_2 X_5$ & $\epsilon_4 + X_5^2$\\
         $Y_5$ & $X_5^2$ & $\epsilon_5$\\
         $Y_6$ & $X_1^2$ & $\epsilon_6 - 2X_1$\\
         $Y_7$ & $X_4^2$ & $\epsilon_7 - X_4$\\
         $Y_8$ & $X_3^2$ & $\epsilon_8$ \\
         $Y_9$ & $X_4^2$ & $\epsilon_9$ \\
         $Y_{10}$ & $X_1X_2$ & $\epsilon_{10}$\\
         \hline
    \end{tabular}
    \caption{Simulation Setup: Ten Dimensions}
    \label{tab:10_dim_setup2}
\end{center}
\end{table}

\begin{table}
\begin{center}
    \begin{tabular}{|c|c|c|c|}
        \hline
         Dimension &  Coverage & Length \\
         \hline
         Overall  & 0.900 & NA \\
         First & 0.987 & 5.043  \\
         Second & 0.986 & 5.040  \\
         Third & 0.990 & 7.048  \\
         Fourth & 0.981 & 17.506 (0.001)  \\
         Fifth & 0.987 & 5.045  \\
         Sixth & 0.995 & 11.457 (0.001)  \\
         Seventh & 0.987 & 5.251  \\
         Eighth & 0.987 &  5.044 \\
         Ninth & 0.987 & 5.045  \\
         Tenth & 0.987 & 5.045 \\

         \hline
    \end{tabular}
    \caption{Simulation Result: Ten Dimensions}
    \label{tab:10_dim_results}
\end{center}
\end{table}

We can see, when the error terms are similar (in this case, the same for dimensions 1, 2, 8, 9, 10), the balance is nearly perfect. Marginal coverage tends to be higher for the dimensions with a heteroskedastic error, though the length is not unreasonable. It is also clear that the overall desired coverage of $90\%$ is achieved. 

\subsection{Checking Balance Between Dimensions}

Two simulation scenarios were run with a 10 dimensional response. Only the quantile method of forming the rectangular prediction region was used, one with a homoskedastic error and one with a heteroskedastic error. Data were split; 500 for training and 500 for calibration. In the homoskedastic scenario, all of the $\epsilon$ terms were generated from a multivariate Normal distribution with a mean of 0, a variance of 1, and a correlation of 0.9. That is, $\epsilon_i \sim \mathcal{N}(0, 1) \text{ } i = 1, \ldots, 10$ and $corr(\epsilon_i, \epsilon_j) = 0.9$ for $i \neq j$ and $i, j = 1, \ldots, 10$. In the heteroskedastic scenario, all of the $\epsilon|\vect{X}$ terms were generated from a multivariate Normal distribution with a mean of 0, a variance of $|X_1|$, and a correlation of 0.9. That is, $\epsilon_i|\vect{X} \sim \mathcal{N}(0, |X_1|) \text{ } i = 1, \ldots, 10$ and $corr(\epsilon_i|\vect{X}, \epsilon_j|\vect{X}) = 0.9$ for $i \neq j$ and $i, j = 1, \ldots, 10$. All quantile models were fit with the correctly specified conditional quantiles, which can be found in~\cref{tab:10_dim_setup_homoskedastic} and~\cref{tab:10_dim_setup_heteroskedastic}. Covariates were generated according to~\cref{tab:10_dim_setup1}. The results of the simulation scenario for the homoskedastic setup can be found in~\cref{tab:10_dim_results_homoskedastic}. For the heteroskedastic scenario, they can be found in~\cref{tab:10_dim_results_heteroskedastic}. The simulation size was 1,000,000. We also compared our conformal quantile hyperrectangular regression with marginal conformalized quantile regression and a Bonferroni correction. Those results can be found in~\cref{tab:10_dim_results_homoskedastic_bon} and~\cref{tab:10_dim_results_heteroskedastic_bon}.

\begin{table}
\begin{center}
    \begin{tabular}{|c|c|c|c|}
        \hline
         Variable & Conditional Mean & Error \\
         \hline
         $Y_1$ & $2X_1$ & $\epsilon_1$ \\
         $Y_2$ & $X_1 + X_1 X_2$ & $2\epsilon_2$\\
        $Y_3$ & $X_2^2$ & $3\epsilon_3$ \\
         $Y_4$ & $X_2 X_5$ & $4\epsilon_4$\\
         $Y_5$ & $X_5^2$ & $5\epsilon_5$\\
         $Y_6$ & $X_1^2$ & $\epsilon_6 $\\
         $Y_7$ & $X_4^2$ & $2\epsilon_7 $\\
         $Y_8$ & $X_3^2$ & $3\epsilon_8$ \\
         $Y_9$ & $X_4^2$ & $4\epsilon_9$ \\
         $Y_{10}$ & $X_1X_2$ & $5\epsilon_{10}$\\
         \hline
    \end{tabular}
    \caption{Simulation Setup: Homoskedastic Ten Dimensions}
    \label{tab:10_dim_setup_homoskedastic}
\end{center}
\end{table}

\begin{table}
\begin{center}
    \begin{tabular}{|c|c|c|c|}
        \hline
         Variable & Conditional Mean & Error \\
         \hline
         $Y_1$ & $2X_1$ & $\epsilon_1$ \\
         $Y_2$ & $X_1 + X_1 X_2$ & $2\epsilon_2$\\
        $Y_3$ & $X_2^2$ & $3\epsilon_3$ \\
         $Y_4$ & $X_2 X_5$ & $4\epsilon_4$\\
         $Y_5$ & $X_5^2$ & $5\epsilon_5$\\
         $Y_6$ & $X_1^2$ & $\epsilon_6 $\\
         $Y_7$ & $X_4^2$ & $2\epsilon_7 $\\
         $Y_8$ & $X_3^2$ & $3\epsilon_8$ \\
         $Y_9$ & $X_4^2$ & $4\epsilon_9$ \\
         $Y_{10}$ & $X_1X_2$ & $5\epsilon_{10}$\\
         \hline 
    \end{tabular}
    \caption{Simulation Setup: Heteroskedastic Ten Dimensions}
    \label{tab:10_dim_setup_heteroskedastic}
\end{center}
\end{table}

\begin{table}
\begin{center}
    \begin{tabular}{|c|c|c|c|}
        \hline
         Dimension &  Coverage & Length \\
         \hline
         Overall  & 0.900 & NA \\
         First & 0.965 & 4.257  \\
         Second & 0.964 & 8.508  \\
         Third & 0.965 & 12.777 (0.001)  \\
         Fourth & 0.965 & 17.037  (0.001)\\
         Fifth & 0.965 & 21.295 (0.001) \\
         Sixth & 0.965 & 4.258  \\
         Seventh & 0.965 & 8.517  \\
         Eighth & 0.965 &  12.776 (0.001)\\
         Ninth & 0.965 & 17.036 (0.001)\\
         Tenth & 0.965 & 21.285 (0.001) \\

         \hline
    \end{tabular}
    \caption{Hyperrectangle Balance Simulation Result: Homoskedastic Ten Dimensions}
    \label{tab:10_dim_results_homoskedastic}
\end{center}
\end{table}

\begin{table}
\begin{center}
    \begin{tabular}{|c|c|c|c|}
        \hline
         Dimension &  Coverage & Length \\
         \hline
         Overall  & 0.901 & NA \\
         First & 0.958 & 6.396  \\
         Second & 0.960 & 12.703 (0.001) \\
         Third & 0.960 & 19.054 (0.001)\\
         Fourth & 0.960 & 25.394 (0.001)\\
         Fifth & 0.960 & 31.747 (0.002) \\
         Sixth & 0.960 & 6.350 \\
         Seventh & 0.960 & 12.699 (0.001) \\
         Eighth & 0.960 &  19.047 (0.001) \\
         Ninth & 0.960 & 25.395 (0.001) \\
         Tenth & 0.960 & 31.738 (0.002)\\

         \hline
    \end{tabular}
    \caption{Hyperrectangle Balance Simulation Result: Heteroskedastic Ten Dimensions}
    \label{tab:10_dim_results_heteroskedastic}
\end{center}
\end{table}

\begin{table}
\begin{center}
    \begin{tabular}{|c|c|c|c|}
        \hline
         Dimension &  Coverage & Length \\
         \hline
         Overall  & 0.963 & NA \\
         First & 0.990 & 5.381  \\
         Second & 0.990 & 10.979 (0.001)  \\
         Third & 0.990 & 16.152 (0.001)  \\
         Fourth & 0.990 & 21.552  (0.001)\\
         Fifth & 0.990 & 26.945 (0.002) \\
         Sixth & 0.990 & 5.384  \\
         Seventh & 0.990 & 10.768 (0.001)  \\
         Eighth & 0.990 &  16.155 (0.001)\\
         Ninth & 0.990 & 21.529 (0.001)\\
         Tenth & 0.990 & 26.933 (0.002) \\

         \hline
    \end{tabular}
    \caption{Bonferroni Balance Simulation Result: Homoskedastic Ten Dimensions}
    \label{tab:10_dim_results_homoskedastic_bon}
\end{center}
\end{table}

\begin{table}
\begin{center}
    \begin{tabular}{|c|c|c|c|}
        \hline
         Dimension &  Coverage & Length \\
         \hline
         Overall  & 0.967 & NA \\
         First & 0.990 & 9.346 (0.001)  \\
         Second & 0.990 & 18.357 (0.002) \\
         Third & 0.990 & 26.997 (0.002)\\
         Fourth & 0.990 & 36.033 (0.003)\\
         Fifth & 0.990 & 45.052 (0.004) \\
         Sixth & 0.990 & 8.995 (0.001)\\
         Seventh & 0.990 & 17.996 (0.002) \\
         Eighth & 0.990 &  26.999 (0.002) \\
         Ninth & 0.990 & 35.996 (0.003) \\
         Tenth & 0.990 & 44.993 (0.004)\\

         \hline
    \end{tabular}
    \caption{Bonferroni Balance Simulation Result: Heteroskedastic Ten Dimensions}
    \label{tab:10_dim_results_heteroskedastic_bon}
\end{center}
\end{table}

These simulation results provide corroboration with~\cref{sec:balance}, that if the conditional error terms follow the same distribution, up to a scale change, our method will achieve good balance and the desired overall coverage with a consistent model.

A second set of simulations were conducted to check the balance when the responses are not conditionally exchangeable between $p$-dimensions, but still follow the ICED assumption. Only the quantile method of forming the hyperrectangular prediction region was used. As in the other simulations, the models were fit with the correctly specified conditional quantiles. The data were generated as follows, $X_1 \sim \text{exp}(\lambda = 0.2)$, $X_2 \sim \text{Unif}(-5, 5)$, $E(Y_1|\vect{X}) = 5 + 2X_1$, $E(Y_2|\vect{X}) = 3X_1 + X_1X_2$, and $E(Y_3|\vect{X}) = 5X_2 + X_2^2$. In the homoskedastic scenario, all of the error terms were generated from a standard Normal distribution. The correlation between the first and second dimension was -0.8. All other correlations between dimensions were 0.8. In the heteroskedastic scenario, all of the error terms were generated from a Normal distribution with a mean of 0 mean and a variance of $|X_1|$. The correlation between the first and second dimension was -0.8. All other correlations between dimensions were 0.8. The simulation sizes were $1,000,000$. The results of the simulation are given in~\cref{tab:3_dim_bal_normal1}.

We looked at a third set of simulations to see if balance is close to holding when the ICED assumption is slightly violated. 
In each simulation, 
$X_1 \sim \text{exp}(\lambda = 0.2)$, $X_2 \sim \text{Unif}(-5, 5)$, $E(Y_1|\vect{X}) = X_1$, $E(Y_2|\vect{X}) = X_1$, and $E(Y_3|\vect{X}) = X_1$. In the homoskedastic case, the errors were generated as $\epsilon_1|\vect{X} \sim \text{Gamma}(\alpha = 2, \lambda = 0.2)$, $\epsilon_2|\vect{X} \sim \text{Gamma}(\alpha = 2, \lambda = 0.2)$, and $\epsilon_3|\vect{X} \sim \text{Gamma}(\alpha = 2, \lambda = 0.2)$ where $\lambda$ represents the rate parameter and $\alpha$ represents the shape parameter. In the heteroskedastic case, the errors were $\epsilon_1|\vect{X} \sim \text{Gamma}(\alpha = 2|X_2|, \lambda = 0.2)$, $\epsilon_2|\vect{X} \sim \text{Gamma}(\alpha = 2|X_2|, \lambda = 0.2)$, and $\epsilon_3|\vect{X} \sim \text{Gamma}(\alpha = 2|X_2|, \lambda = 0.2)$. We started off with the uncorrelated errors for all of the responses that are listed above, $\vect{\epsilon} = (\epsilon_1, \epsilon_2, \epsilon_3)^T$, and a desired correlation matrix between them, $R$. To generate the correlated errors, we multiplied $\vect{\epsilon}$ by the upper-triangular Cholesky decomposition of $R$. Because linear combinations of gamma distributions do not generally follow a gamma distribution, the ICED assumption is violated. For each simulation, the correctly specified quantile structure of the model was estimated. Data were split; 2,000 for training and 200 for calibration. The miscoverage rate, $\alpha$, was set to be 0.10 in all simulations. The simulation size was $1,000,000$ in all simulations. In the four simulations the error correlation maxtrix was
\[
R_1 = \begin{bmatrix}
1 & 0.3 & 0.6\\
0.3 & 1 & 0.5 \\
0.6 & 0.5 & 1
\end{bmatrix}
\,\,
R_2 = \begin{bmatrix}
1 & 0.8 & 0.8\\
0.8 & 1 & 0.8 \\
0.8 & 0.8 & 1
\end{bmatrix}
\,\,
R_3 = \begin{bmatrix}
1 & 0.6 & 0.3\\
0.6 & 1 & 0.5 \\
0.3 & 0.5 & 1
\end{bmatrix}
\,\,
R_4 = \begin{bmatrix}
1 & 0.2 & 0.2\\
0.2 & 1 & 0.2 \\
0.2 & 0.2 & 1
\end{bmatrix}.
\]
The results of the simulations are given in Tables \ref{tab:3_dim_bal1} - \ref{tab:3_dim_bal4}. Coverage is the average coverage over all simulations, volume is the average volume. Len i and Marg i represent the length of the interval in the i-th dimension and the marginal coverage in the i-th dimension, respectively. 

\begin{landscape}
\begin{table}
\begin{center}
    \begin{tabular}{|c|c|c|c|c|c|c|c|}
        \hline
         Error &  Coverage & Len 1 & Len 2 & Len 3 & Marg 1 & Marg 2 & Marg 3\\
         \hline
         Homoskedastic  & 0.900  &3.981 & 3.977 & 3.978 & 0.951 & 0.950 & 0.950\\
         Heteroskedastic  & 0.900  &7.970 & 7.961 & 7.968 & 0.951 & 0.949 & 0.949\\

         \hline
    \end{tabular}
    \caption{Simulation Result: Normal Error Three Dimensions}
    \label{tab:3_dim_bal_normal1}
\end{center}
\end{table}

\begin{table}
\begin{center}
    \begin{tabular}{|c|c|c|c|c|c|c|c|}
        \hline
         Error &  Coverage & Len 1 & Len 2 & Len 3 & Marg 1 & Marg 2 & Marg 3\\
         \hline
         Homoskedastic  & 0.896  &26.155 & 26.499 & 27.073 & 0.963 & 0.959 & 0.954\\
         Heteroskedastic  & 0.897  &41.718 & 41.949 & 42.403 & 0.960 & 0.959 & 0.958\\

         \hline
    \end{tabular}
    \caption{Simulation Result: Three Dimension Balance Comparison 1}
    \label{tab:3_dim_bal1}
\end{center}
\end{table}

\begin{table}
\begin{center}
    \begin{tabular}{|c|c|c|c|c|c|c|c|}
        \hline
         Error &  Coverage & Len 1 & Len 2 & Len 3 & Marg 1 & Marg 2 & Marg 3\\
         \hline
         Homoskedastic  & 0.897  &24.733 & 25.413 & 25.504 & 0.953 & 0.944 & 0.941\\
         Heteroskedastic  & 0.897  &42.063 & 42.326 & 42.765 & 0.961 & 0.959 & 0.958\\

         \hline
    \end{tabular}
    \caption{Simulation Result: Three Dimension Balance Comparison 2}
    \label{tab:3_dim_bal2}
\end{center}
\end{table}
\end{landscape}

\begin{landscape}
\begin{table}
\begin{center}
    \begin{tabular}{|c|c|c|c|c|c|c|c|}
        \hline
         Error &  Coverage & Len 1 & Len 2 & Len 3 & Marg 1 & Marg 2 & Marg 3\\
         \hline
         Homoskedastic  & 0.896  &26.195 & 26.913 & 26.904 & 0.963 & 0.958 & 0.955\\
         Heteroskedastic  & 0.897  &41.751 & 42.286 & 42.267 & 0.960 & 0.958 & 0.958\\

         \hline
    \end{tabular}
    \caption{Simulation Result: Three Dimension Balance Comparison 3}
    \label{tab:3_dim_bal3}
\end{center}
\end{table}

\begin{table}
\begin{center}
    \begin{tabular}{|c|c|c|c|c|c|c|c|}
        \hline
         Error &  Coverage & Len 1 & Len 2 & Len 3 & Marg 1 & Marg 2 & Marg 3\\
         \hline
         Homoskedastic  & 0.897  &26.771 & 26.966 & 27.081 & 0.965 & 0.963 & 0.961\\
         Heteroskedastic  & 0.896  &42.661 & 42.794 & 42.879 & 0.964 & 0.963 & 0.962\\

         \hline
    \end{tabular}
    \caption{Simulation Result: Three Dimension Balance Comparison 4}
    \label{tab:3_dim_bal4}
\end{center}
\end{table}
\end{landscape}

It is clear from these simulation results that balance is nearly achieved in each simulation scenario. A note from these studies is that it appears the correlation between the conditional error terms may impact how good the balance is in finite samples. In~\cref{tab:3_dim_bal4} the marginal coverage rates are much closer to each other, compared to~\cref{tab:3_dim_bal2}, which are the setups with the lowest correlation and highest correlation between dimensions, respectively. We can also see that, although close, the bound in~\eqref{eq:joint_leq_prod} holds as a strict inequality for both the ten dimensional and three dimensional simulations. Balance also appears to nearly hold with minor violations to the ICED assumption.

One point that should be noted from all of the simulations is that the rectangular approach to conformal prediction intervals maintains reasonable balance without sacrificing the volume of the prediction regions. As we can see from the simulations, conformalized hyperrectangles do a much better job than marginal conformal prediction intervals with a Bonferroni correction. The Bonferroni correction suffers from severe overcoverage when the marginal distributions have correlated errors, which is likely to occur with a multivariate response. 

\section{Real Data Analysis}\label{sec:real}

All analyses were performed using R Statistical Software \citep{R_citation}. We looked at how well our conformal quantile hyperrectangle method worked in predicting systolic and diastolic blood pressure for children. The blood pressure data was obtained on children aged 9 and 11 from Bristol, UK via the bp R package \citep{bp_package}. Covariates included in the analysis were the child's gender, age, height, weight, and BMI. The responses of interest were the average systolic and diastolic blood pressure during the subject's first visit. 

The number of subjects in the data set was 1,289. The data were randomly permuted 200 times. For each permutation, 900 observations were used to train the quantile neural network models, 200 were used to compute non-conformity scores, and 189 were used to test the out of sample prediction interval coverage for a 90\% prediction interval. All models were computed with the qrnn R package with one hidden layer with three nodes \citep{qrnn_package}. We also compared the conformal quantile hyperrectangle results to the na\"ive Bonferroni quantile model that used the trained Bonferroni adjusted quantile models without a conformal adjustment as well as marginal conformalized quantile regression that used a Bonferroni correction with a univariate conformal adjustment. That is, the marginal coverage rates were set to 95\% and we looked at how often they both correctly captured the blood pressure. One without a conformal adjustment and one with a conformal adjustment. Those comparisons can be found in~\cref{tab:quant_bp_data}. Coverage is the average coverage for the out of sample data over all permutations, volume is the average volume. Len i and Marg i represent the length of the interval in the i-th dimension and the marginal coverage in the i-th dimension, respectively.

\begin{table}[ht]
\begin{center}
    \begin{tabular}{|c|c|c|c|c|c|}
        \hline
         Approach &  Coverage & Marg 1 & Marg 2  & Len 1 & Len 2 \\
         \hline
        Quantile Conformal Hyperrectangles & 0.899 & 0.941 & 0.935 & 45.104 & 42.573 \\
        Na\"ive Bonferroni& 0.859 & 0.914 & 0.917 & 42.685 & 40.085 \\
        Quantile Conformal Bonferroni & 0.915 & 0.947 & 0.949 & 48.820 & 46.240 \\
         \hline
    \end{tabular}
    \caption{Blood Pressure Data Results}
    \label{tab:quant_bp_data}
\end{center}
\end{table}

It is clear from the data analysis that, although the conformal quantile hyperrectangle approach gives longer sides and a larger volume than the na\"ive quantile neural network approach with a Bonferroni adjustment, the coverage is much closer to the desired 90\%. In fact, the na\"ive method undercovers while the conformal Bonferroni correction overcovers. Looking at a boxplot of the overall coverage comparisons~\cref{fig:boxplot_comp}, it is easy to see that the na\"ive approach frequently undercovers out of sample predictions while the conformal Bonferroni correction overcovers. The conformal quantile hyperrectangular approach is the only one that achieves the desired 90\% nominal coverage on average without being overly conservative.

\begin{figure}[ht]
    \centering
\includegraphics[scale = 0.7]{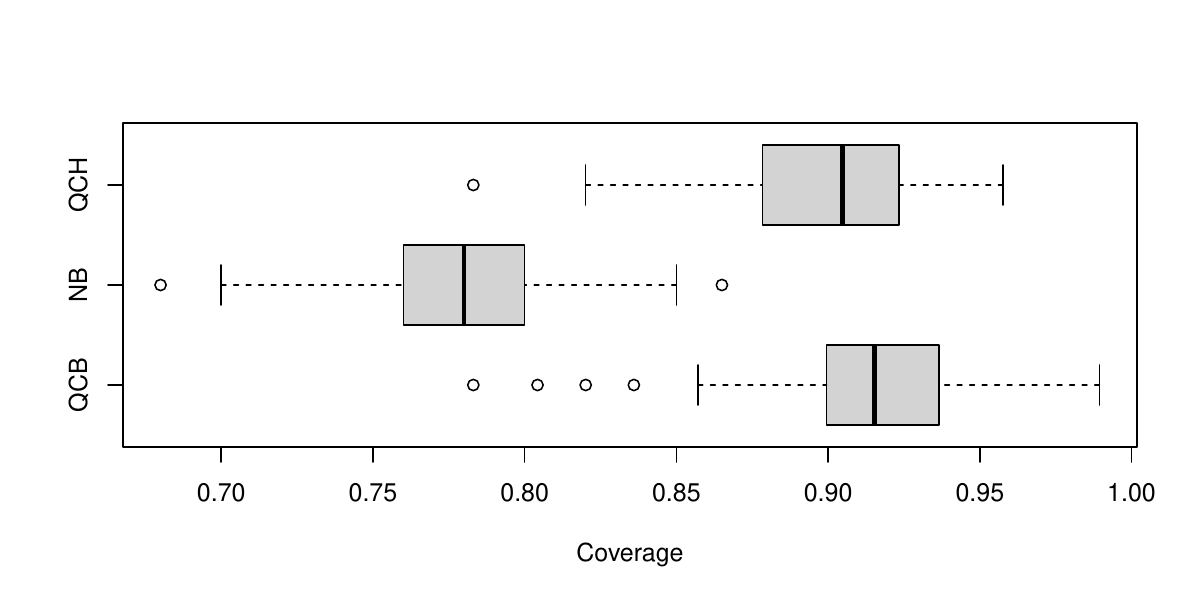}
\caption{Comparison of overall coverage. QCH represents the quantile conformal hyperrectangle approach, NB represents the na\"ive Bonferroni approach, and QCB represents the quantile conformal bonferroni approach.}\label{fig:boxplot_comp}
\end{figure}

\section{Conclusion}\label{sec:conclusion}

The conformal rectangular approach introduced here is a new way of constructing multivariate prediction intervals that do not require a computationally expensive grid search. It controls the miscoverage rate in finite samples, while allowing for heteroskedastic or homoskedastic errors. It also helps maintain balance between the dimensions of the response. Though, we may lose some precision in the model fitting process because we require the data to be split, it's clear from the real data analysis that slight loss in model precision is worth the coverage guarantees. 

More work can be done exploring other ways to compare the dimensions. Conditional variance or standard deviation also seem like reasonable measures of comparison. Another direction is to explore the use of Cartesian product of prediction sets, for example, in the case of bimodal marginal distributions. Also, based on the simulations in~\cref{sec:simulation}, there may also be weaker assumptions to guarantee balance that can be further explored.

\acks{This project was partially funded by National Institutes of Health Predoctoral Training Grant T32 HL 144461.}

\newpage
\bibliography{ref}

\section{Appendix}\label{sec:appendix}

\noindent The next two lemmas come from \cite{romanocqr} Similar results can be found in \cite{conformal_shift, split_conformal_lei_2016, conformal_book}. 

\begin{lemma} \label{lemma:quantiles_exchangeability} (Quantiles and exchangeability). Suppose $Z_1, \ldots Z_n$ are exchangeable random variables.

\noindent For any $\alpha \in (0, 1)$,
\[
\mathbb{P}(Z_n \leq \hat{Q}_n(\alpha)) \geq \alpha,
\]

\noindent where $\hat{Q}_n$ is the empirical quantile function, $\hat{Q}_n(\alpha) = Z_{(\lceil\alpha n\rceil)}$.

\noindent Moreover, if the random variables $Z_1, \ldots, Z_n$ are almost surely distinct, then
\[
\mathbb{P}(Z_n \leq \hat{Q}_n(\alpha)) \leq \alpha + \frac{1}{n}.
\]

\end{lemma}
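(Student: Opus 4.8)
The plan is to translate the coverage event into a statement about the rank of $Z_n$ among the sample and then exploit exchangeability through a symmetrization argument. Write $k := \lceil \alpha n \rceil$, so that $\hat{Q}_n(\alpha) = Z_{(k)}$, and introduce the counting statistic $N := \sum_{j=1}^n \mathbf{1}(Z_j \leq Z_{(k)})$. The first observation I would record is purely deterministic: for every realization, at least the $k$ smallest values satisfy $Z_{(i)} \leq Z_{(k)}$, so $N \geq k$ always; moreover, when the $Z_j$ are distinct there are exactly $k$ values not exceeding the $k$-th order statistic, so $N = k$ identically on that event. This is the only place where the distinctness hypothesis enters, and it is precisely what upgrades the one-sided bound into a two-sided bound.

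The core step is the symmetrization, where I would show that $\mathbb{P}(Z_j \leq Z_{(k)})$ does not depend on $j$. Let $\sigma$ be the transposition swapping indices $j$ and $n$; by exchangeability $(Z_{\sigma(1)}, \ldots, Z_{\sigma(n)})$ has the same law as $(Z_1, \ldots, Z_n)$, and because $Z_{(k)}$ is a symmetric function of $Z_1, \ldots, Z_n$ it is left invariant by $\sigma$. Applying $\sigma$ to the event $\{Z_n \leq Z_{(k)}\}$ sends $Z_n$ to $Z_j$ while fixing $Z_{(k)}$, which yields $\mathbb{P}(Z_n \leq Z_{(k)}) = \mathbb{P}(Z_j \leq Z_{(k)})$ for each $j$. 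Averaging over $j$ then produces the key identity
\[
\mathbb{P}(Z_n \leq Z_{(k)}) = \frac{1}{n}\sum_{j=1}^n \mathbb{P}(Z_j \leq Z_{(k)}) = \frac{1}{n}\,\mathbb{E}[N],
\]
which expresses the target probability entirely in terms of the expected count.

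Both bounds then follow immediately. For the lower bound, $N \geq k$ gives $\mathbb{P}(Z_n \leq Z_{(k)}) \geq k/n$, and since $k = \lceil \alpha n \rceil \geq \alpha n$ this is at least $\alpha$, with no distinctness required. For the upper bound under distinctness, $N = k$ almost surely yields the exact value $\mathbb{P}(Z_n \leq Z_{(k)}) = k/n$; combining this with the elementary ceiling bound $\lceil \alpha n \rceil \leq \alpha n + 1$ gives $k/n \leq \alpha + 1/n$, as claimed.

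I expect no deep obstacle here; the one point requiring care is the symmetrization, specifically justifying that swapping two coordinates fixes the order statistic $Z_{(k)}$ while carrying the indicator event correctly, and being scrupulous about whether ties can make $N$ strictly exceed $k$ (they can, which is exactly why the upper bound is stated only under almost-sure distinctness). The remaining steps are just the ceiling arithmetic $\alpha n \leq \lceil \alpha n \rceil \leq \alpha n + 1$.
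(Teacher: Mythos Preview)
Your argument is correct and is essentially the standard proof of this fact: reduce the coverage event to the expected number of points below $Z_{(k)}$ via symmetrization, then bound that count deterministically. The paper itself does not supply a proof of this lemma; it simply cites it from \cite{romanocqr} (and related references), so there is no in-paper argument to compare against. Your write-up matches the proof given in those sources.
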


\begin{lemma} \label{lemma:inflation_quantiles} (Inflation of Quantiles). Suppose $Z_1, \ldots Z_n$ are exchangeable random variables.

\noindent For any $\alpha \in (0, 1)$,
\[
\mathbb{P}(Z_{n+1} \leq \hat{Q}_n((1 + \frac{1}{n})\alpha)) \geq \alpha.
\]
\noindent Moreover, if the random variables $Z_1, \ldots, Z_n$ are almost surely distinct, then
\[
\mathbb{P}(Z_{n+1} \leq \hat{Q}_n((1 + \frac{1}{n})\alpha)) \leq \alpha + \frac{1}{n + 1}.
\]
    
\end{lemma}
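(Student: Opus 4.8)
The plan is to reduce both displayed inequalities to Lemma~\ref{lemma:quantiles_exchangeability} applied to the enlarged exchangeable collection $Z_1,\ldots,Z_{n+1}$, by matching the empirical quantile of the first $n$ observations at the inflated level $(1+1/n)\alpha$ with the empirical $\alpha$-quantile of all $n+1$ observations.

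First I would record the arithmetic identity $(1+\tfrac1n)\alpha\cdot n=\alpha(n+1)$ and set $k:=\lceil\alpha(n+1)\rceil$. Then $\hat{Q}_n\big((1+\tfrac1n)\alpha\big)$ is exactly the $k$-th order statistic of $Z_1,\ldots,Z_n$ (with the usual convention that this equals $+\infty$ when $k>n$), while the empirical $\alpha$-quantile of $Z_1,\ldots,Z_{n+1}$, which I will write $\hat{Q}_{n+1}(\alpha)$, is the $k$-th order statistic of $Z_1,\ldots,Z_{n+1}$. The elementary observation is that deleting one observation cannot decrease a fixed-index order statistic, so the $k$-th order statistic of the $n$-sample is at least that of the $(n+1)$-sample; that is, $\hat{Q}_n\big((1+\tfrac1n)\alpha\big)\ge\hat{Q}_{n+1}(\alpha)$.

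For the lower bound this ordering gives the inclusion $\{Z_{n+1}\le\hat{Q}_{n+1}(\alpha)\}\subseteq\{Z_{n+1}\le\hat{Q}_n((1+\tfrac1n)\alpha)\}$, so applying the first part of Lemma~\ref{lemma:quantiles_exchangeability} to the exchangeable tuple $Z_1,\ldots,Z_{n+1}$ yields $\mathbb{P}\big(Z_{n+1}\le\hat{Q}_n((1+\tfrac1n)\alpha)\big)\ge\mathbb{P}(Z_{n+1}\le\hat{Q}_{n+1}(\alpha))\ge\alpha$. For the upper bound, under the almost-sure distinctness hypothesis I would show the two events coincide up to a null set: on the probability-one event that the $Z_i$ are all distinct and $Z_{n+1}$ differs from $Z_1,\ldots,Z_n$, writing $N:=\#\{i\le n:Z_i<Z_{n+1}\}$, one checks that $Z_{n+1}\le\hat{Q}_n((1+\tfrac1n)\alpha)$ holds iff $N\le k-1$, and likewise $Z_{n+1}\le\hat{Q}_{n+1}(\alpha)$ holds iff the rank $N+1$ of $Z_{n+1}$ in the full sample is at most $k$, i.e.\ again $N\le k-1$. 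Then the distinctness part of Lemma~\ref{lemma:quantiles_exchangeability} applied to $Z_1,\ldots,Z_{n+1}$ gives $\mathbb{P}(Z_{n+1}\le\hat{Q}_{n+1}(\alpha))\le\alpha+\tfrac1{n+1}$, and the event identity transfers the bound.

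The main obstacle is the upper-bound half: one has to get the ceiling bookkeeping to line up so that a single integer $k$ governs both quantiles, dispose of the degenerate case $k>n$ (where the inflated quantile is $+\infty$ and the bound is vacuous, since then $\alpha+\tfrac1{n+1}\ge1$), and — crucially — rule out ties, both among the $Z_i$ and between $Z_{n+1}$ and the $Z_i$, which is precisely where the ``almost surely distinct'' assumption enters. Once the event identity $\{Z_{n+1}\le\hat{Q}_n((1+\tfrac1n)\alpha)\}=\{Z_{n+1}\le\hat{Q}_{n+1}(\alpha)\}$ (a.s.) is established, both displayed inequalities follow directly from Lemma~\ref{lemma:quantiles_exchangeability}.
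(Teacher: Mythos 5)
Your proof is correct and is essentially the standard argument for this lemma (the paper itself does not prove it but cites \cite{romanocqr}, where the same reduction is used): identify $\hat{Q}_n\bigl((1+\tfrac1n)\alpha\bigr)$ with the $\lceil\alpha(n+1)\rceil$-th order statistic, compare it to the empirical $\alpha$-quantile of the augmented sample $Z_1,\ldots,Z_{n+1}$, and invoke Lemma~\ref{lemma:quantiles_exchangeability}. The only caveat is notational: the distinctness hypothesis must be read as applying to all of $Z_1,\ldots,Z_{n+1}$ (as the lemma's statement implicitly intends), since your event identity and the application of the second part of Lemma~\ref{lemma:quantiles_exchangeability} both require $Z_{n+1}$ to be a.s. distinct from the first $n$ variables.
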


\noindent The proof of~\cref{thm:rec_validity}
\begin{proof}\label{prf:rec_validity}
The validity of the proposed conformal hyperrectangles can be seen as follows. For CHR, let $n_{cal}$ denote $n_{cal2}$ and $\mathcal{I}_{cal}$ denote $\mathcal{I}_{cal2}$.
Call the ratios for converting between dimensions, $R_i$, $i = 1, \ldots, p$. If they depend on the covariates, call them $R_i(\vect{X}_{n+1})$, $i = 1, \ldots, p$.

\noindent Denote $k$ as the smallest $\lceil(n_{cal} + 1)(1 - \alpha)\rceil$ value in $\{W_i; i \in \mathcal{I}_{cal}\}$. 

\noindent Note that $W_{n+1} \leq k$ is equivalent to
\[
\iff \max \{ \hat{q}_{j, \lo}(\vect{X}_{n+1}) - Y_{j, n+1}, Y_{j, n+1} - \hat{q}_{j, \hi}(\vect{X}_{n+1}) \} \leq k \times R_j^{-1}(\vect{X}_{n+1}), \ \forall j \in \{1, \ldots, p\}
\]
\[
\iff Y_{j, n+1} \in [ \hat{q}_{j, \lo}(\vect{X}_{n+1}) - k \times R_j^{-1}(\vect{X}_{n+1}), \hat{q}_{j, \hi}(\vect{X}_{n+1}) + k \times R_j^{-1}(\vect{X}_{n+1})  ], \ \forall j \in \{1, \ldots, p\}
\]
\[
\iff \vect{Y}_{n+1} \in \hat{\vect{C}}(\vect{X}_{n+1}).
\]
So we have that,
\[
\vect{Y}_{n+1} \in \hat{\vect{C}}(\vect{X}_{n+1}) \iff W_{n+1} \leq k.
\]
Which tells us,
\[
\mathbb{P}(\vect{Y}_{n+1} \in \hat{\vect{C}}(\vect{X}_{n+1})) = \mathbb{P}(W_{n+1} \leq k).
\]
Because the original pairs of $(\vect{Y}_i, \vect{X}_i)$ are  exchangeable, so are the scores $W_i$ for $i \in \mathcal{I}_{cal}$ and $i = n + 1$. So, by~\cref{lemma:inflation_quantiles}, 
\[
\mathbb{P}(\vect{Y}_{n+1} \in \hat{\vect{C}}(\vect{X}_{n+1})) \geq 1 - \alpha,
\]
\noindent and, under the assumption that the $W_i$'s are almost surely distinct, 
\[
\mathbb{P}(\vect{Y}_{n+1} \in \hat{\vect{C}}(\vect{X}_{n+1})) \leq 1 - \alpha + \frac{1}{n_{cal} + 1}.
\]
\end{proof}

\begin{lemma}  \citep[Lemma 2.2]{van1998asymptotic}\label{lemma:lipschitz_weak_convergence}
    \[
    X_n \overset{d}{\to} X \] 
    holds  
    iff, for all bounded,  Lipschitz functions $g$,
\[\lim_{n\to\infty}\mathbb{E}(g(X_n)) = \mathbb{E}(g(X))
    \]
\end{lemma}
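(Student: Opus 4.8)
\textbf{Proof proposal for \cref{lemma:lipschitz_weak_convergence}.} The plan is to prove the two implications separately; only the reverse one requires any work. For the forward direction: if $X_n \overset{d}{\to} X$, then $\mathbb{E}(g(X_n)) \to \mathbb{E}(g(X))$ for every bounded \emph{continuous} $g$ (this is the portmanteau theorem, or simply the definition of weak convergence under the test-function formulation), and since every bounded Lipschitz function is bounded and continuous, the stated convergence holds in particular for that subclass.

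For the reverse direction the plan is to reduce matters to convergence of distribution functions. It suffices to treat real-valued $X_n$ and $X$ --- the $\mathbb{R}^k$ statement of \citep[Lemma 2.2]{van1998asymptotic} goes through along the same lines with half-lines replaced by rectangles, or by invoking the open-set form of the portmanteau theorem --- and to show that $F_n(t) := \mathbb{P}(X_n \le t) \to F(t) := \mathbb{P}(X \le t)$ at every continuity point $t$ of $F$. I would fix such a $t$ and an $\epsilon > 0$ and introduce two bounded Lipschitz ``ramp'' functions $g^+_\epsilon, g^-_\epsilon$, each bounded by $1$ with slope of magnitude $1/\epsilon$ on the transition interval, chosen so that $g^+_\epsilon \equiv 1$ on $(-\infty, t]$ and $g^+_\epsilon \equiv 0$ on $[t+\epsilon, \infty)$, while $g^-_\epsilon \equiv 1$ on $(-\infty, t-\epsilon]$ and $g^-_\epsilon \equiv 0$ on $[t, \infty)$. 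Pointwise, $g^-_\epsilon$ is dominated by the indicator of $(-\infty, t]$, which is in turn dominated by $g^+_\epsilon$; hence $\mathbb{E}(g^+_\epsilon(X)) \le F(t+\epsilon)$, $\mathbb{E}(g^-_\epsilon(X)) \ge F(t-\epsilon)$, and $\mathbb{E}(g^-_\epsilon(X_n)) \le F_n(t) \le \mathbb{E}(g^+_\epsilon(X_n))$ for every $n$.

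Feeding the last chain into the hypothesis gives $\limsup_n F_n(t) \le \lim_n \mathbb{E}(g^+_\epsilon(X_n)) = \mathbb{E}(g^+_\epsilon(X)) \le F(t+\epsilon)$ and $\liminf_n F_n(t) \ge \mathbb{E}(g^-_\epsilon(X)) \ge F(t-\epsilon)$. Letting $\epsilon \downarrow 0$, right-continuity of $F$ gives $\limsup_n F_n(t) \le F(t)$, and continuity of $F$ at $t$ gives $\liminf_n F_n(t) \ge F(t)$, so $F_n(t) \to F(t)$, which is exactly $X_n \overset{d}{\to} X$. There is no genuinely hard step here: the only point needing care is the $\epsilon \downarrow 0$ passage, which is also precisely why the lower bound requires $t$ to be a continuity point of $F$. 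Since the statement is verbatim \citep[Lemma 2.2]{van1998asymptotic}, in the final paper I would simply cite that reference, recording the ramp-function sandwich above only if a self-contained argument is wanted.
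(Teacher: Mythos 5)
Your proposal is correct. The paper does not prove this lemma at all --- it is stated purely as a citation to \citep[Lemma 2.2]{van1998asymptotic} and used as an imported tool in the proof of Lemma~\ref{lemma:convergence_distribution} --- so there is no in-paper argument to compare against. Your two-sided ramp-function sandwich for the reverse implication, with right-continuity handling the upper bound and continuity of $F$ at $t$ handling the lower bound, is the standard textbook proof and is sound; as you note, the scalar case is all that the paper's application actually requires, and simply retaining the citation is the appropriate choice for the final text.
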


\begin{lemma}\label{lemma:convergence_distribution}
    For a compact set, $\mathcal{C}$, assume that
    \begin{equation}\label{eq:uniform_consistency}
    \sup_{x \in \mathcal{C}} |\hat{\mu}_j(\vect{x}) - \mu_j(\vect{x})| \overset{a.s.}{\to} 0 \text{ as } n_{tr} \to \infty \ \forall j.
    \end{equation}

    \noindent Then we have that as $n_{tr}$ and $n_{cal1}$ $\to \infty$, the empirical probability measure
    \[
    \frac{1}{n_{cal1}} \sum_{i = 1}^{n_{cal1}} \delta(Y_{i,j} - \hat{\mu}_j(\vect{X}_i)) \overset{d}{\to} Y_j - \mu_j(\vect{X}), \quad \forall j,
    \]
    where $\delta(\cdot)$ denotes the Dirac delta probability measure at the vector enclosed within the parentheses. 
\end{lemma}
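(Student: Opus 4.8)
The plan is to reduce the claim to a law-of-large-numbers statement via \cref{lemma:lipschitz_weak_convergence}: it suffices to show that for every bounded Lipschitz function $g$ — write $M := \sup|g|$ and $L$ for its Lipschitz constant — and every $j$,
\[
\frac{1}{n_{cal1}}\sum_{i=1}^{n_{cal1}} g\big(Y_{i,j}-\hat\mu_j(\vect{X}_i)\big)\ \longrightarrow\ \mathbb{E}\big[g\big(Y_j-\mu_j(\vect{X})\big)\big]
\]
(in probability, say) as $n_{tr},n_{cal1}\to\infty$, since the empirical measure converges weakly to the law of $Y_j-\mu_j(\vect{X})$ exactly when every such integral converges. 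First I would condition on the training fold $\mathcal{Z}_{tr}$: because $\mathcal{Z}_{tr}$ is independent of the calibration fold $\mathcal{Z}_{cal1}$, once $\mathcal{Z}_{tr}$ is fixed the fitted function $\hat\mu_j$ is a deterministic measurable map while the calibration pairs $(Y_{i,j},\vect{X}_i)$, $i=1,\dots,n_{cal1}$, remain i.i.d.\ copies of $(Y_j,\vect{X})$.

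Next I would split the average as
\[
\frac{1}{n_{cal1}}\sum_i g\big(Y_{i,j}-\hat\mu_j(\vect{X}_i)\big) = \frac{1}{n_{cal1}}\sum_i g\big(Y_{i,j}-\mu_j(\vect{X}_i)\big) + \frac{1}{n_{cal1}}\sum_i\Big[g\big(Y_{i,j}-\hat\mu_j(\vect{X}_i)\big)-g\big(Y_{i,j}-\mu_j(\vect{X}_i)\big)\Big] =: (\mathrm I)+(\mathrm{II}).
\]
Term $(\mathrm I)$ is an average of i.i.d.\ bounded random variables, so the strong law of large numbers gives $(\mathrm I)\to\mathbb{E}[g(Y_j-\mu_j(\vect{X}))]$ almost surely as $n_{cal1}\to\infty$, with no dependence on $n_{tr}$. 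For $(\mathrm{II})$, each summand has absolute value at most $\min\{2M,\,L\,|\hat\mu_j(\vect{X}_i)-\mu_j(\vect{X}_i)|\}$, hence
\[
|(\mathrm{II})|\ \le\ L\,\frac{1}{n_{cal1}}\sum_i |\hat\mu_j(\vect{X}_i)-\mu_j(\vect{X}_i)|\ \le\ L\,\sup_{\vect{x}\in\mathcal{C}}|\hat\mu_j(\vect{x})-\mu_j(\vect{x})|,
\]
where the last inequality uses that the calibration design points lie in $\mathcal{C}$; by the uniform-consistency hypothesis \eqref{eq:uniform_consistency}, this (purely $\mathcal{Z}_{tr}$-measurable) bound tends to $0$ almost surely as $n_{tr}\to\infty$, uniformly in $n_{cal1}$. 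Combining the two pieces, given $\varepsilon>0$ I can first take $n_{tr}$ large so that $|(\mathrm{II})|<\varepsilon/2$ and then $n_{cal1}$ large so that $|(\mathrm I)-\mathbb{E}[g(Y_j-\mu_j(\vect{X}))]|<\varepsilon/2$, which establishes the display as a genuine double limit; carrying this out along a countable convergence-determining family of bounded Lipschitz $g$ and invoking \cref{lemma:lipschitz_weak_convergence} then gives weak convergence of the empirical measure.

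The point requiring the most care is the interplay of the two sample sizes and the two sources of randomness: $\hat\mu_j$ is random only through $\mathcal{Z}_{tr}$ while the empirical average runs over $\mathcal{Z}_{cal1}$, and conditioning on $\mathcal{Z}_{tr}$ is precisely what decouples them — after which one must check that the remainder bound on $(\mathrm{II})$ is genuinely free of $n_{cal1}$ and that the limit of $(\mathrm I)$ is free of $n_{tr}$, so that the order in which the two indices grow is immaterial. A secondary subtlety is the appearance of $\sup_{\vect x\in\mathcal C}$: the bound on $(\mathrm{II})$ needs the calibration design points to fall in $\mathcal{C}$, which is the natural reading here since this lemma feeds into the balance theorems stated for $\vect{X}=\vect{x}\in\mathcal{C}$; if $\vect{X}$ is not supported in $\mathcal{C}$ one would instead control $(\mathrm{II})$ on the event $\{\vect X_i\in\mathcal C\ \forall i\}$ and handle its complement, or simply carry $\mathbb{P}(\vect X\in\mathcal C)=1$ as a standing assumption.
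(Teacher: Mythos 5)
Your proposal is correct and follows essentially the same route as the paper's proof: decompose the empirical average into the term with the true $\mu_j$ (handled by the strong law of large numbers) plus a remainder controlled by the Lipschitz constant times the uniform estimation error, then invoke \cref{lemma:lipschitz_weak_convergence}. Your added care about the double limit and about the calibration points lying in $\mathcal{C}$ only makes explicit what the paper leaves implicit.
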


\begin{proof}
    Let $f$ be a bounded  Lipschitz function. Then we have for some constant $K$ that,
    \[
    \Big|\frac{1}{n_{cal1}} \sum_{i = 1}^{n_{cal1}} f(Y_{i,j} - \hat{\mu}_j(\vect{X}_i)) - \frac{1}{n_{cal1}} \sum_{i = 1}^{n_{cal1}} f(Y_{i,j} - \mu_j(\vect{X}_i))\Big| \leq K \max_{i = 1, \ldots, n_{cal1}}|\hat{\mu}_j(\vect{X}_i) - \mu(\vect{X}_i)| \overset{a.s.}{\to} 0,
    \]
  as $n_{tr} \to \infty$, $\forall j,$
  by \eqref{eq:uniform_consistency}.
    Now, by the strong law of large numbers, 
    \[
    \frac{1}{n_{cal1}} \sum_{i = 1}^{n_{cal1}} f(Y_{i,j} - \mu_j(\vect{X}_i)) \to \mathbb{E}(f(Y_{j} - \mu_j(\vect{X}))) \text{ almost surely, as } n_{cal1} \to \infty,
    \]
    hence,
    \[
    \frac{1}{n_{cal1}}\sum_{i = 1}^{n_{cal1}} f(Y_{i,j} - \hat{\mu}_j(\vect{X}_i))  \to \mathbb{E}(f(Y_{j} - \mu_j(\vect{X}))) \text{ as } n_{cal1} \text{ and } n_{tr} \to \infty.
    \]
    Then by~\cref{lemma:lipschitz_weak_convergence}, $\frac{1}{n_{cal1}} \sum_{i = 1}^{n_{cal1}} \delta(Y_{i,j} - \hat{\mu}_j(\vect{X}_i)\overset{d}{\to} Y_j - \mu_j(\vect{X})$.
\end{proof}

\begin{lemma} \citep[Lemma 21.2]{van1998asymptotic} \label{lemma:continuity_points}
    Let $X$ be a random variable with distribution $F$. Define the quantile function of a cumulative distribution F as $F^{-1}(p) = \inf\{x:F(x) \geq p \}$. Then, for any $0 < p < 1$, $F_n^{-1}(p) \to F^{-1}(p)$ at all continuity points p of $F^{-1}$ if and only if $X_n \overset{d}{\to} X$. 
\end{lemma}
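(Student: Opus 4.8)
The plan is to work in the rescaled coordinates furnished by the ICED assumption and reduce the statement to showing that, after conformalization, the scaled total half-width $c_j(\hat v_j + Adj_j)$ tends to the \emph{same} limit in every dimension $j$; balance then follows because $c_j\{Y_j - \mu_j(\vect x)\}$ shares the common law $G$ across $j$. First I would fix the scale by taking the first dimension as the reference, so $c_1 = 1$, and record that for the absolute-difference score the initial interval for dimension $j$ is $[\hat\mu_j(\vect x) - \hat v_j,\ \hat\mu_j(\vect x) + \hat v_j]$, where $\hat v_j$ is the inflated empirical $(1-\alpha)$ quantile of $\{|Y_{i,j} - \hat\mu_j(\vect X_i)|\}$ on the first calibration set, and that the conversion ratios are $\hat c_j := \hat v_1/\hat v_j$.

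Next I would establish convergence of the ingredients. Applying Lemma~\ref{lemma:convergence_distribution} with the bounded Lipschitz map $g = f\circ|\cdot|$ shows that the empirical law of $\{|Y_{i,j} - \hat\mu_j(\vect X_i)|\}$ converges weakly to the law of $|Y_j - \mu_j(\vect X)|$; combined with Lemma~\ref{lemma:continuity_points} and the CQ assumption (continuity of the quantile function of $G$, which transfers to the law $H$ of $|Z|$ with $Z\sim G$), this yields $\hat v_j \overset{P}{\to} v_j$, the $(1-\alpha)$ quantile of $|Y_j-\mu_j(\vect X)|$. Since $c_j|Y_j-\mu_j(\vect X)|\sim H$ for every $j$, we get $c_j v_j = H^{-1}(1-\alpha)$ common across $j$, hence $\hat c_j = \hat v_1/\hat v_j \overset{P}{\to} v_1/v_j = c_j$.

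Then I would analyze the second-stage adjustment. A short calculation gives $E_{k,j} = |Y_{k,j}-\hat\mu_j(\vect X_k)| - \hat v_j$, so the converted scores are $A_{k,j} = \hat c_j|Y_{k,j}-\hat\mu_j(\vect X_k)| - \hat v_1$ and $W_k = \max_j\{\hat c_j|Y_{k,j}-\hat\mu_j(\vect X_k)|\} - \hat v_1$. Using the convergences above with a further weak-convergence argument for the empirical law of $\{W_k\}$, the adjustment $Adj_1$ (the inflated empirical $(1-\alpha)$ quantile of $\{W_k\}$) converges to $M_{1-\alpha} - v_1$, where $M_{1-\alpha}$ is the $(1-\alpha)$ quantile of $\max_j|Z_j|$. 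Because $Adj_j = Adj_1/\hat c_j$ and $\hat v_j = \hat v_1/\hat c_j$, the total half-width obeys $c_j(\hat v_j + Adj_j) = (c_j/\hat c_j)(\hat v_1 + Adj_1) \overset{P}{\to} v_1 + (M_{1-\alpha}-v_1) = M_{1-\alpha}$, the same limit for every $j$.

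Finally I would convert equal scaled half-widths into equal marginal miscoverage. Writing $b_j = \hat\mu_j(\vect x) - \mu_j(\vect x)$, the uniform consistency assumption gives $b_j\to 0$ almost surely, so $\eta_j = P\big(Z_j \notin [\,c_j b_j - c_j(\hat v_j+Adj_j),\ c_j b_j + c_j(\hat v_j+Adj_j)\,]\big)$ with $Z_j\sim G$; as $c_j b_j\to 0$ and the scaled half-widths converge to the common $M_{1-\alpha}$, each $\eta_j \overset{P}{\to} P(|Z|>M_{1-\alpha})$, a single constant, whence $\max_j|\eta_j - \tfrac1p\sum_i\eta_i|\overset{P}{\to}0$. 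The main obstacle is the combination of Step three and this last step: I must show the empirical law of $\{W_k\}$ converges with a limiting quantile function continuous at $1-\alpha$ (so that $Adj_1$ converges), and I must ensure the limiting endpoints $\pm M_{1-\alpha}$ are continuity points of $G$, so that the $o_p(1)$ perturbations coming from $b_j$ and from the half-width do not alter the limiting probability. This is precisely where the CQ continuity assumption must be leveraged, and where extra care is required, since the distribution of $\max_j|Z_j|$ is not directly covered by the stated continuity of $G^{-1}$.
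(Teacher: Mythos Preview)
Your proposal does not address the stated lemma at all. The statement under consideration is Lemma~\ref{lemma:continuity_points}, a classical result (quoted from van der Vaart) asserting that $F_n^{-1}(p)\to F^{-1}(p)$ at every continuity point $p$ of $F^{-1}$ if and only if $X_n\overset{d}{\to}X$. The paper does not supply a proof of this lemma; it is simply cited as a known fact and then \emph{used} as a tool in the proofs of Theorems~\ref{thm:quant_rec_bal} and~\ref{thm:point_rec_bal}.

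What you have written is instead a proof sketch of Theorem~\ref{thm:point_rec_bal} (asymptotic balance of conformal hyperrectangular regression under the CQ assumption). You invoke Lemma~\ref{lemma:continuity_points} in your argument, which is the correct way to use it, but you never prove the lemma itself. A proof of the lemma would need to establish both implications: (i) that weak convergence $X_n\overset{d}{\to}X$ forces $F_n^{-1}(p)\to F^{-1}(p)$ at continuity points of $F^{-1}$, and (ii) conversely that pointwise convergence of the quantile functions at all such $p$ implies weak convergence. Neither direction appears anywhere in your write-up. If the intent was to prove Theorem~\ref{thm:point_rec_bal}, your outline is broadly aligned with the paper's own argument (and in places more detailed, e.g.\ your treatment of $Adj_1$), but as a proof of the lemma as stated it is entirely off target.
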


\noindent The proof of~\cref{thm:quant_rec_bal}
\begin{proof} \label{prf:quant_rec_bal}
    Denote the oracle conditional quantiles as $[q_{j,\lo}(\vect{x}), q_{j,\hi}(\vect{x})]$ and the consistent estimators as $[\hat{q}_{j,\lo}(\vect{x}), \hat{q}_{j,\hi}(\vect{x})]$. Let the first dimension be the reference dimension, that is, $c_1 =1,  \hat{c}_1 (\vect{x})\equiv 1$.

    \noindent By consistency, we know that, uniformly in $\vect{x}\in \mathcal{C},  \ j = 1, \ldots, p$, $P(| q_{j,\lo}(\vect{x}) - \hat{q}_{j,\lo}(\vect{x})| > \delta_{\lo}) \to 0$ for every $\delta_{\lo} > 0$ as $n_{tr} \to \infty$ and  $P(| q_{j,\hi}(\vect{x}) - \hat{q}_{j,\hi}(\vect{x})| > \delta_{\hi}) \to 0$ for every $\delta_{\hi} > 0$ as $n_{tr} \to \infty$.
Hence, for $j=2,\ldots, p$, \[
\hat{c}_j(\vect{x})=\frac{\hat{c}_j(\vect{x})}{\hat{c}_1(\vect{x})} = \frac{\hat{q}_{1,\hi}(\vect{x}) - \hat{q}_{1,\lo}(\vect{x})}{\hat{q}_{j,\hi}(\vect{x}) - \hat{q}_{j,\lo}(\vect{x})}
\]
converges  in probability to $c_j/c_1=c_j$ uniformly in $\vect{x}\in \mathcal{C},  \ j = 2, \ldots, p.$ Hence, for all $\delta_j>0, j=2,\ldots, p$ and uniformly in $\vect{x}$,  $P(|\hat{c}_j(\vect{x})-c_j|> \delta_j)\to 0$, in probability, as $n_{tr}\to\infty$.

\noindent Define the symmetric set difference, $A \Delta B = (A \cap B^{C})\cup (A^{C} \cap B)$.
    
\noindent  For each $j = 1, \ldots, p$, a large enough $n_{tr}$, and a constant $\ell$, by consistency we have that
    \begin{align*}
    P(Y_j &\in [q_{j,\lo}(\vect{x}) - \frac{1}{c_j} \ell, q_{j,\hi}(\vect{x}) + \frac{1}{c_j} \ell] \Delta [\hat{q}_{j,\lo}(\vect{x}) -  \frac{1}{\hat{c}_j(\vect{x})} \ell, \hat{q}_{j,\hi}(\vect{x}) + \frac{1}{\hat{c}_j(\vect{x})}\ell ]) \\
    \leq P(Y_j &\in [{q}_{j,\lo}(\vect{x}) - \delta_{\lo} - \frac{1}{c_j - \delta_j}\ell, {q}_{j, \lo}(\vect{x}) + \delta_{\lo} - \frac{1}{c_j +\delta_j}\ell] \\ 
    &\cup [{q}_{j,\hi}(\vect{x}) - \delta_{\hi} + \frac{1}{c_j + \delta_j}\ell, {q}_{j,\hi}(\vect{x}) + \delta_{\hi}  +  \frac{1}{c_j - \delta_j}\ell])
    \end{align*}
Because $\delta_{\lo}$, $\delta_{\hi}$, and $\delta_j$ are arbitrarily small positive numbers, as $n_{tr} \to \infty$
    \[
    P(Y_j \in [q_{j,\lo}(\vect{x}) - \frac{1}{c_j} \ell, q_{j,\hi}(\vect{x}) + \frac{1}{c_j} \ell] \Delta [\hat{q}_{j,\lo}(\vect{x}) -  \frac{1}{\hat{c}_j(\vect{x})} \ell, \hat{q}_{j,\hi}(\vect{x}) + \frac{1}{\hat{c}_j(\vect{x})}\ell ]) \overset{P}{\to} 0.
    \]
By the ICED assumption, the scaled oracle set has the same miscoverage, $\gamma \in (0,1)$, in each dimension. So using the ICED and the above, 
\begin{equation*}
\eta_j = P(Y_j \notin [\hat{q}_{j,\lo}(\vect{x}) - \frac{1}{\hat{c}_j(\vect{x})} \ell, \hat{q}_{j,\hi}(\vect{x}) +  \frac{1}{\hat{c}_j(\vect{x})} \ell]) \overset{P}{\to} \gamma, \ j = 1, \ldots, p.
\end{equation*}
So, 
\[
\max_{j}\{|\eta_j - \frac{1}{p}\sum\limits_{i = 1}^p \eta_i|\} \overset{P}{\to} 0.
\]
Which is the definition for asymptotic balance. Because $\ell$ is a general constant, we can let it be the conformal adjustment, so conformal quantile hyperrectangular regression achieves asymptotic balance under the listed assumptions.
\end{proof}

\noindent The proof of~\cref{thm:point_rec_bal}
\begin{proof} \label{prf:point_rec_bal}
Note the CQ assumption implies the validity of the ICED assumption. 
By the ICED assumption we have that,
\[
|c_j(Y_j - \mu_j(\vect{X})| \overset{\mathcal{D}}{=} G, \ j = 1, \ldots, p,
\]
and by~\cref{lemma:convergence_distribution}, the empirical  probability distribution induced by $c_j\{Y_{i,j}-\hat{\mu}_j(\vect{X}_i)\}, i=1,\ldots, n_{cal1}$, denoted by $G_{n,j}$,  converges in distribution to $G$, as both $n_{tr}$ and $n_{cal1}\to \infty$, for $j = 1, \ldots, p$.
\noindent 
Let $c_j\hat{v}_j = G_{n, j}^{-1}(\alpha)$ and $c_jv_j = G^{-1}(\alpha)$, where $G_{n, j}^{-1}(\alpha)$ and $G^{-1}(\alpha)$ are the empirical and true $\alpha$ quantiles respectively. Note that $\hat{v}_j$ and $v_j$ are the observed initial conformal expansions and the oracle initial conformal expansions. By assumption CQ and~\cref{lemma:continuity_points},
\begin{equation}\label{eq:balance_scale_point_estimate}
\hat{v}_j = \frac{G_{n, j}^{-1}(\alpha)}{c_j} \to \frac{G^{-1}(\alpha)}{c_j} = v_j \text{ as } n_{tr} \mbox{ and } n_{cal1} \to \infty, \ j = 1, \ldots, p.
\end{equation}
Next, let the first dimension be the reference, that is, $c_1 = 1, \hat{c}_1 \equiv 1$. Then, our estimates of $\hat{c}_j$ are $\frac{\hat{v}_1}{\hat{v}_j}$. By \eqref{eq:balance_scale_point_estimate}, for $j = 2, \ldots p$,
\begin{equation*}
\hat{c}_j = \frac{\hat{c}_j}{\hat{c}_1} = \frac{\hat{v}_1}{\hat{v}_j} \to \frac{v_1}{v_j} = \frac{c_j}{c_1} = c_j \text{ as } n_{cal1} \to \infty.
\end{equation*}
Define $\zeta > 0$, $\epsilon > 0$, and $\delta > 0$. For a large enough $n_{tr}$ and $n_{cal1}$, and some constant, $\ell$, by the ICED assumption and the consistency of $\hat{\mu}$, we have
\[
P(Y_j \in  [\mu_j(\vect{x}) - \frac{1}{c_j}(v_1 + \ell), \mu_j(\vect{x}) + \frac{1}{c_j}(v_1 + \ell)] \Delta [\hat{\mu}_j(\vect{x}) - \frac{1}{\hat{c}_j}(\hat{v}_1 + \ell), \hat{\mu}_j(\vect{x}) + \frac{1}{\hat{c}_j}(\hat{v}_1 + \ell)])
\]
\begin{align*}
\leq  P(Y_j &\in[\mu_j(\vect{x}) - \zeta - \frac{1}{c_j - \delta}(v_1 + \epsilon + \ell), \mu_j(\vect{x}) + \zeta - \frac{1}{c_j - \delta}(v_1 - \epsilon + \ell)] \\&\cup [\mu_j(\vect{x}) - \zeta + \frac{1}{c_j - \delta}(v_1 - \epsilon + \ell), \mu_j(\vect{x}) + \zeta+ \frac{1}{c_j - \delta}(v_1 + \epsilon + \ell)])
\end{align*}

\noindent Because $\zeta$, $\delta$, and $\epsilon$ are arbitrarily small numbers, as $n_{tr} \to \infty$ and $n_{cal1} \to \infty$,
\[
P(Y_j \in  [\mu_j(\vect{x}) - \frac{1}{c_j}(v_1 + \ell), \mu_j(\vect{x}) + \frac{1}{c_j}(v_1 + \ell)] \Delta [\hat{\mu}_j(\vect{x}) - \frac{1}{\hat{c}_j}(\hat{v}_1 + \ell), \hat{\mu}_j(\vect{x}) + \frac{1}{\hat{c}_j}(\hat{v}_1 + \ell)]) \overset{P}{\to} 0.
\]
By the ICED assumption, the scaled oracle set has the same miscoverage, $\gamma \in (0, 1)$, in each dimension. So, using the ICED and the above, as $n_{tr} \to \infty$ and $n_{cal1} \to \infty$,
\[
\eta_j = P(Y_j \notin [\hat{\mu}_j(\vect{x}) - \frac{1}{\hat{c}_j}(\hat{v}_1 + \ell), \hat{\mu}_j(\vect{x}) + \frac{1}{\hat{c}_j}(\hat{v}_1 + \ell)] \overset{P}{\to} \gamma, \ j = 1, \ldots, p. 
\]
So, 
\[
\max_{j}\{|\eta_j - \frac{1}{p}\sum\limits_{i = 1}^p \eta_i|\} \overset{P}{\to} 0.
\]
Which is the definition for asymptotic balance. Because $\ell$ is a general constant, we can set it to be the final conformal adjustment, so CHR achieves asymptotic balance under the listed assumptions.
\end{proof}
\end{document}